\titleformat{\section}{\normalfont\scshape\large}{\S\thesection.}{0.5em}{}
\titleformat{\subsection}{\normalfont\scshape\normalsize}{\thesubsection.}{0.5em}{}
\titleformat{\subsubsection}{\normalfont\itshape\normalsize}{\thesubsubsection.}{0.5em}{}
\newtheorem{theorem}{Theorem}[section]
\newtheorem{lemma}[theorem]{Lemma}
\newtheorem{corollary}[theorem]{Corollary}
\theoremstyle{definition}
\newtheorem{definition}[theorem]{Definition}
\newtheorem{example}[theorem]{Example}
\theoremstyle{remark}
\newtheorem{remark}[theorem]{Remark}
\title{Nested Sequents for Intuitionistic Multi-Modal Logics: Modularity, Cut-Elimination, and Undecidability}
\author{Tim S. Lyon\\
\small Technische Universit{\"a}t Dresden\\
\small \texttt{timothy\_stephen.lyon@tu-dresden.de}}
\date{}
\tikzset{
modal/.style={>=stealth',shorten >=1pt,shorten <=1pt,auto,node distance=1.5cm,semithick},
world/.style={circle,draw,minimum size=0.5cm,fill=gray!15},
point/.style={circle,draw,inner sep=0.5mm,fill=black},
reflexive above/.style={->,loop,looseness=7,in=120,out=60},
reflexive below/.style={->,loop,looseness=7,in=240,out=300},
reflexive left/.style={->,loop,looseness=7,in=150,out=210},
reflexive right/.style={->,loop,looseness=7,in=30,out=330}
}
\DeclareSymbolFont{extraup}{U}{zavm}{m}{n}
\DeclareMathSymbol{\vardiamond}{\mathalpha}{extraup}{87}
\newcommand*{\owedge}{%
  \mathbin{%
    \mathpalette\@owedge{}%
  }%
}
\newcommand*{\@owedge}[2]{%
  \sbox0{$#1\oplus\m@th$}%
  \dimen2=.5\dimexpr\wd0-\ht0-\dp0\relax 
  \dimen@=\dimexpr\ht0+\dp0\relax
  \def\lw{.04}
  \def\radius{.5-\lw/2}%
  \kern\dimen2 
  \tikz[
    line width=\lw\dimen@,
    line join=round,
    x=\dimen@,
    y=\dimen@,
    baseline=\dimexpr-.5\dimen@+\dp0\relax,
  ]
  \draw
    (0,0) circle[radius=\radius]
    (225:\radius) -- (0,.5-\lw) -- (-45:\radius)
  ;%
  \kern\dimen2 
}
\definecolor{tim}{RGB}{0, 0, 250}
\newcommand{\ifandonlyif}{\textit{iff} }
\newcommand{\iffi}{\textit{iff} }
\newcommand{\etc}{$\ldots$ }
\newcommand{\dfn}{Definition}
\newcommand{\fig}{Figure}
\newcommand{\lem}{Lemma}
\newcommand{\thm}{Theorem}
\newcommand{\sect}{Section}
\newcommand{\cptr}{Ch.}
\newcommand{\D}{\mathrm{D}_{\charx}}
\newcommand{\T}{\mathrm{T}_{\charx}}
\newcommand{\four}{\mathrm{4}_{\charx}}
\newcommand{\B}{\mathrm{B}_{\charx}}
\newcommand{\five}{\mathrm{5}_{\charx}}
\newcommand{\ipa}{\mathrm{IPA}}
\newcommand{\axs}{\mathcal{A}}
\newcommand{\bl}{[}
\newcommand{\br}{]}
\newcommand{\sbl}{\{}
\newcommand{\sbr}{\}}
\newcommand{\cglcalc}{\mathsf{NK_{m}}(\axs)}
\newcommand{\calc}{\mathsf{NIK_{m}}(\axs)}
\newcommand{\ns}{\mathcal{G}}
\newcommand{\nsii}{\mathcal{H}}
\newcommand{\nsiii}{\mathcal{K}}
\newcommand{\nsiv}{\mathcal{J}}
\newcommand{\empseq}{\emptyset}
\newcommand{\sar}{\vdash}
\newcommand{\ent}{\vdash_{\!\axs}}
\newcommand{\sat}{\vDash} 
\newcommand{\asat}{\vDash_{\!\axs}}
\newcommand{\da}{\downarrow}
\newcommand{\h}{\mathsf{H}}
\newcommand{\prf}{\pi} 
\newcommand{\km}{\mathsf{K_{m}}}
\newcommand{\ikm}{\mathsf{IK_{m}}}
\newcommand{\ik}{\mathsf{IK}}
\newcommand{\ikt}{\mathsf{IKt}}
\newcommand{\ikma}{\mathsf{IK_{m}}(\axs)}
\newcommand{\cate}{}
\newcommand{\concat}{\cdot}
\newcommand{\g}[1]{g(#1)}
\newcommand{\glang}{L_{\g{\axs}}}
\newcommand{\pto}{\longrightarrow}
\newcommand{\empstr}{\varepsilon} 
\newcommand{\charx}{x}
\newcommand{\chary}{y}
\newcommand{\charz}{z}
\newcommand{\chara}{a}
\newcommand{\charb}{b}
\newcommand{\charc}{c}
\newcommand{\stra}{s}
\newcommand{\strb}{t}
\newcommand{\strc}{r}
\newcommand{\conv}[1]{\overline{#1}}
\newcommand{\albetstr}{\albet^{\ast}}
\newcommand{\len}[1]{|#1|}
\newcommand{\strabox}{[\stra]}
\newcommand{\stradia}{\langle \stra \rangle}
\newcommand{\osdr}{\pto_{\g{\axs}}}
\newcommand{\dr}{\pto_{\g{\axs}}^{*}}
\newcommand{\iimp}{\rightarrow} 
\newcommand{\imp}{\rightarrow}
\newcommand{\ieq}{\leftrightarrow} 
\newcommand{\inot}{\neg}
\newcommand{\xbox}{[\charx]}
\newcommand{\xdia}{\langle \charx \rangle}
\newcommand{\xboxc}{[\conv{\charx}]}
\newcommand{\xdiac}{\langle \conv{\charx} \rangle}
\newcommand{\dia}{\diamondsuit}
\newcommand{\diap}[1]{\langle #1 \rangle}
\newcommand{\boxp}[1]{[ #1 ]}
\newcommand{\albet}{\Upsigma}
\newcommand{\albetf}{\albet^{+}}
\newcommand{\albetb}{\albet^{-}}
\newcommand{\lang}{\mathscr{L}}
\newcommand{\prop}{\mathsf{Prop}}
\newcommand{\id}{\mathsf{id}}
\newcommand{\xboxr}{\xbox\mathsf{R}}
\newcommand{\xboxl}{\xbox\mathsf{L}}
\newcommand{\xdiar}{\xdia\mathsf{R}}
\newcommand{\xdial}{\xdia\mathsf{L}}
\newcommand{\conl}{\land\mathsf{L}}
\newcommand{\conr}{\land\mathsf{R}}
\newcommand{\disl}{\lor\mathsf{L}}
\newcommand{\disr}{\lor\mathsf{R}}
\newcommand{\iimpl}{{\iimp}\mathsf{L}}
\newcommand{\iimpr}{{\iimp}\mathsf{R}}
\newcommand{\dial}{\xdia\mathsf{L}}
\newcommand{\boxl}{\xbox\mathsf{L}}
\newcommand{\boxr}{\xbox\mathsf{R}}
\newcommand{\botl}{\bot\mathsf{L}}
\newcommand{\botr}{\bot\mathsf{R}}
\newcommand{\ddr}{\mathsf{d}_{\charx}}
\newcommand{\wkl}{\mathsf{wL}}
\newcommand{\wkr}{\mathsf{wR}}
\newcommand{\ew}{\mathsf{ew}}
\newcommand{\nec}{\mathsf{nec}_{\charx}}
\newcommand{\ec}{\mathsf{ec}}
\newcommand{\ctrl}{\mathsf{cL}}
\newcommand{\cut}{\mathsf{cut}}
\newcommand{\orth}[1]{{\Downarrow}{#1}}
\newcommand{\north}[1]{{\Uparrow}{#1}}
\newcommand{\names}{\mathtt{Names}}
\newcommand{\rcolii}{\mathcal{H}^{+}}
\newcommand{\nshole}{\{\cdot\}}
\newcommand{\tr}{\mathsf{tr}}
\newcommand{\mint}{\iota}
\newcommand{\frames}{\mathscr{F}}
\newcommand{\simplepath}[1]{\overset{\mathclap{#1}}{\leadsto}}
\newcommand{\prpath}[1]{\hspace{3pt}\raise-3pt\hbox{$\simplepath{#1}$}\hspace{3pt}}
\newcommand{\prgr}[1]{\mathsf{pg}(#1)}
\newcommand{\prv}{\mathsf{V}}
\newcommand{\pre}{\mathsf{E}}
\newcommand{\shift}{\mathsf{sft}(\axs)}
\newcommand{\set}[1]{\{#1\}}
\newcommand{\ru}{\mathsf{r}}
\newcommand{\dotprf}[1]{\;\;\mathbin{\rotatebox[origin=c]{-90}{$\overset{\mathbin{\rotatebox[origin=c]{90}{$#1$}}}{\hdots}$}}}
\newcommand{\proves}{\Vdash}
\newcommand{\branch}{\mathcal{B}}
\newcommand{\infs}[1]{\mathsf{InF}(#1)}
\newcommand{\outfs}[1]{\mathsf{OutF}(#1)}
\newcommand{\ddn}[1]{#1^{-}}
\begin{document}

\maketitle
\thispagestyle{empty}

\noindent
\begin{minipage}{\textwidth}
\noindent
\textbf{Abstract.} We introduce and study single-conclusioned nested sequent calculi for a broad class of intuitionistic multi-modal logics known as \emph{intuitionistic grammar logics (IGLs)}. These logics serve as the intuitionistic counterparts of classical grammar logics, and subsume standard intuitionistic modal and tense logics, including $\mathsf{IK}$ and $\mathsf{IKt}$ extended with combinations of the $\mathrm{T}$, $\mathrm{B}$, $\mathrm{4}$, $\mathrm{5}$, and $\mathrm{D}$ axioms. We analyze fundamental invertibility and admissibility properties of our calculi and introduce a novel structural rule, called the \emph{shift} rule, which unifies standard structural rules arising from modal frame conditions into a single rule. This rule enables a purely syntactic proof of cut-admissibility that is uniform over all IGLs, and yields completeness of our nested calculi as a corollary. Finally, we define a negative translation that constitutes a faithful embedding of classical grammar logics (CGLs) into IGLs, witnessed by proof transformations between multi-conclusioned and single-conclusioned nested sequent proofs for CGLs and IGLs, respectively. This reduces the general validity problem for CGLs to that of IGLs. The general validity problem over a class $\mathcal{C}$ of logics asks: given a logic $\mathsf{L} \in \mathcal{C}$ and a formula $A$, is $A$ valid in $\mathsf{L}$? As this problem is known to be undecidable for CGLs, our reduction implies its undecidability for IGLs as well.
\end{minipage}

\section{Introduction}\label{sec:introduction}










\subsection{Intuitionistic Modal Logics.} Numerous authors have proposed logics that provide an intuitionistic account of modal reasoning~\cite{BovDov84,Bul65,Dov85,Fit48,PloSti86,Fis77,Fis84,Sim94}; see Stewart et al.~\cite{StePaiAle15} for a survey. Among these, the formulations introduced by Fischer-Servi~\cite{Fis77,Fis84} and by Plotkin and Stirling~\cite{PloSti86} have gained particular prominence, especially through the work of Simpson~\cite{Sim94}, who placed such systems on a firm philosophical foundation. The central idea, due to Simpson, is to interpret modalities from within an intuitionistic meta-theory. This perspective transfers core intuitionistic principles to the modal setting, such as the disjunction property, the failure of the law of excluded middle, and breaks the classical duality between 
necessity and possibility. Beyond their theoretical interest, intuitionistic modal logics (IMLs) have proven useful in computer science, being used to design verification techniques~\cite{FaiMen95}, in reasoning about functional programs~\cite{Pit91}, and in the definition of programming languages~\cite{DavPfe01}.

In~\cite{Lyo21b}, multi-modal generalizations of these logics were introduced under the name \emph{intuitionistic grammar logics (IGLs)}. This family of logics subsumes the logics by Fischer-Servi~\cite{Fis84}, Plotkin and Stirling~\cite{PloSti86}, and Ewald’s intuitionistic tense logics (ITLs) which include temporal modalities referring to past and future states of affairs~\cite{Ewa86}. IGLs are also closely related to the intuitionistic description logic $\mathsf{iALC}$~\cite{HaePaiRad11,HaeRad14}, used for modeling and querying legal ontologies. While $\mathsf{iALC}$ can be viewed as a multi-modal variant of $\mathsf{IK}$ (cf.~\cite{Fis84,PloSti86}), the class of IGLs constitutes a more expansive generalization supporting converse modalities, seriality, and intuitionistic path axioms (IPAs).

The standard modalities in IMLs (cf.~\cite{Fis84,PloSti86,Sim94}) refer only to future states in a model. One way in which IGLs generalize IMLs is by also including converse modalities, which refer to past states in models. This gives IGLs a natural temporal interpretation and is analogous to the inclusion of inverse roles in description logics (cf.~\cite{BaaHorLutSat17,HorSat04}). Path axioms express that a sequence of modal transitions implies a direct connection between the initial and terminal worlds; for example, the frame condition $w_{0} R_{\charx_{1}} w_{1}, \ldots, w_{n-1} R_{\charx_{n}} w_{n} \Rightarrow w_{0} R_{\charx} w_{n}$ is expressible by an IPA. Such axioms capture standard frame conditions in modal logic (e.g., reflexivity, symmetry, transitivity, Euclideanity) and serve as the analog of role inclusion axioms in description logics. Consequently, IGLs can be viewed as intuitionistic counterparts of both the description logic $\mathcal{RI}$~\cite{HorSat04} and \emph{classical grammar logics (CGLs)}~\cite{DemNiv05,TiuIanGor12}. Therefore, IGLs provide a rich and unifying language that subsumes many intuitionistic (multi-)modal systems within a single framework.

\subsection{Proof Theory and Sequents.} It is standard practice to equip a logic with an appropriate proof system that supports the study of its (meta-)logical properties and enables formal reasoning. Typically, one seeks to supply \emph{analytic} proof systems for logics, which are advantageous from a computational standpoint. This is because such systems satisfy the \emph{subformula property}--every formula occurring in a derivation is a subformula of the conclusion. 
Since Gentzen’s seminal work~\cite{Gen35a,Gen35b}, the sequent formalism has become a popular framework for the construction of analytic, or cut-free, proof systems. Such systems are particularly useful for establishing non-trivial results such as interpolation~\cite{Mae60,FitKuz15} and decidability~\cite{Bru09,Sim94}, and they facilitate automated reasoning~\cite{Dyc92,HaePaiRad11} by supporting systematic proof-search and the extraction of derivations from formulae.

At present, IGLs are only equipped with Hilbert systems~\cite{Lyo20a}, which are sub-optimal as they invalidate the sub-formula property. We address this gap by introducing analytic sequent-style calculi for IGLs, formulated in the \emph{nested sequent} framework. This choice is motivated by the fact that traditional Gentzen sequents are not expressive enough to yield natural, cut-free systems for logics with converse modalities.

The nested sequent formalism is a generalization of Gentzen's sequent formalism and a nested sequent is a tree of Gentzen sequents. The creation of this formalism is often attributed to Bull~\cite{Bul92} and Kashima~\cite{Kas94}, though later work by Br\"unnler~\cite{Bru09} and Poggiolesi~\cite{Pog09} was instrumental to the popularity of the formalism.\footnote{Leivant~\cite[p.~361]{Lei81} had already introduced a notational variant of nested sequents in his proof-theoretic work on propositional dynamic logic, where formulae are prefixed by so-called execution sequences.} Nested sequents yield elegant calculi that minimize syntactic overhead, produce compact proofs, and where termination of proof-search is more easily established (cf.~\cite{Lyo21thesis,LyoOst23}). Moreover, such systems have proven well-suited in extracting counter-models from failed proof-search~\cite{Bru09,TiuIanGor12}, and they usually admit syntactic cut-admissibility. Thanks to these aesthetic and computational advantages, nested sequent systems have been applied in a variety of domains, including knowledge integration~\cite{LyoGom22}, computing explicit definitions in description logics~\cite{LyoKar24}, 
and in the study of axiomatizability~\cite{IshKik07}. See Lellmann and Poggiolesi~\cite{LelPog24} for a comprehensive survey.\\

\subsection{Related Work and Contributions.} The proof theory for standard IMLs--that is, those based on the work of Fischer-Servi~\cite{Fis84} and Plotkin and Stirling~\cite{PloSti86}--is diverse, encompassing Hilbert-style systems~\cite{Fis84,PloSti86}, labeled natural deduction and sequent calculi~\cite{HaePaiRad11,MarMorStr21,Sim94}, as well as nested sequent systems~\cite{KuzStr19,Lyo21a,Str13}. Since the present work is primarily concerned with the sequent formalism, we focus here on developments in sequent-style calculi.

Among these, Simpson~\cite{Sim94} introduced labeled sequent calculi for (graph-consistent) extensions of $\mathsf{IK}$ with geometric frame conditions,\footnote{See Simpson~\cite[p.~155]{Sim94} for a discussion of graph consistency.} while Marin et al.~\cite{MarMorStr21} proposed fully labeled calculi for extensions of $\mathsf{IK}$ with intuitionistic Scott–Lemmon axioms. Although both sets of proof systems are formulated within the labeled sequent paradigm, they differ in structure. Simpson’s systems capture intuitionistic reasoning \emph{à la} Gentzen by restricting the consequent of sequents to a single formula. By contrast, Marin et al.’s systems omit the single-conclusion restriction and instead internalize the intuitionistic accessibility relation directly into the syntax of labeled sequents. This increases the syntactic complexity of Marin et al.’s systems, but has the advantageous effect that all rules become invertible--this is precluded when the single-conclusion restriction is enforced.  We note that all such systems are cut-free, but only Marin et al. provide a syntactic cut-admissibility algorithm for their systems.


In contrast to the labeled approach, several works have developed cut-free nested sequent systems for IMLs. Stra{\ss}burger~\cite{Str13} introduced nested systems for the ``intuitionistic modal cube,'' namely, extensions of $\mathsf{IK}$ with combinations of the axioms for reflexivity, symmetry, transitivity, Euclideanity, and seriality, denoted $\mathrm{T}$, $\mathrm{B}$, $\mathrm{4}$, $\mathrm{5}$, and $\mathrm{D}$, respectively.\footnote{Stra{\ss}burger’s~\cite{Str13} systems can be viewed as intuitionistic counterparts of Brünnler’s nested systems for classical modal logics~\cite{Bru09}.} These were later generalized in~\cite{Lyo21a} to handle $\mathsf{IK}$ extended with Horn–Scott–Lemmon axioms and seriality. Like Simpson’s labeled systems, these nested calculi are single-conclusioned. Kuznets and Stra{\ss}burger~\cite{KuzStr19} further relaxed this restriction by allowing multi-conclusioned nested sequents in proofs, and only restricting to single-conclusioned sequents in applications of right implication and box rules.


We design our nested sequent systems for IGLs in the single-conclusioned style of~\cite{Lyo21a,Str13}.\footnote{While~\cite{Lyo21a,Str13} use input and output labels to distinguish antecedent and consequent formulae, we employ a more standard two-sided notation in this paper.} This is because the class of IGLs introduced in~\cite{Lyo21b} can be naturally captured within the nested sequent setting, which is preferable to the labeled approach. Nested systems are generally more economical, admitting shorter proofs with less syntactic overhead than labeled sequent calculi (cf.~\cite{Lyo21thesis,LyoOst23}).

Our contributions in this paper are discussed below:

(1) We initiate the structural proof theory of IGLs by introducing their first nested sequent calculi. We establish soundness and completeness and show that these calculi are well-behaved, enjoying fundamental properties such as the admissibility of structural rules and invertibility of most left rules. A particular novelty of our systems is their \emph{modularity}, that is, it is simple to transform a nested system for one IGL into a system for another by systematically changing the side conditions on certain rules. This modularity enables the formulation of a new structural rule, the \emph{shift} rule $\shift$, which uniformly captures all frame conditions induced by IPAs and which we prove height-preserving admissible.

This feature distinguishes our framework from existing nested sequent systems for (intuitionistic) modal logics, which typically require additional structural rules to capture frame conditions in a complete way (cf.~\cite{Bru09,Str13}). Exploiting the admissibility of $\shift$, we present a uniform syntactic proof of cut-admissibility that covers all IGLs. This is yet another contribution of this work, as existing cut-admissibility proofs for modal systems tend to rely on ad hoc structural and cut rules to eliminate cuts in special cases~\cite{Bru09,LyoOrl23,Str13}. Our approach avoids these ad hoc considerations altogether and thereby streamlines the cut-elimination argument.

(2) Our systems serve as the natural intuitionistic counterparts of the nested calculi for classical grammar logics (CGLs) introduced by Tiu et~al.~\cite{TiuIanGor12}, in much the same way that Stra{\ss}burger’s intuitionistic systems~\cite{Str13} correspond to Brünnler’s classical systems~\cite{Bru09}. Tiu et~al. proved cut-admissibility for their nested calculi by means of an \emph{indirect} method, viz., a nested sequent proof is translated into a display calculus proof, cuts are eliminated there, and then the resulting display proof is translated back into a cut-free nested sequent proof. Since our nested systems are essentially single-conclusioned variants of Tiu et~al.'s systems, by making our nested systems multi-conclusioned, our cut-admissibility algorithm can be easily converted into a \emph{direct} algorithm that eliminates cuts for CGLs.

(3) The \emph{general validity problem} over a class $\mathcal{C}$ of logics asks: given a logic $\mathsf{L}$ and a formula $A$, is $A \in \mathsf{L}$, i.e., is $A$ valid in $\mathsf{L}$? Baldoni et al.~\cite[Corollary 2]{BalGioMar98} showed that this problem is undecidable when $\mathcal{C}$ is the class of CGLs. We establish the same result for IGLs by adapting the \emph{negative translation} of Gödel, Gentzen, and Kolmogorov (cf.~\cite{Bus98}) to the modal setting, obtaining a faithful embedding of each CGL into its corresponding IGL. This reduces the general validity problem for CGLs to IGLs, which establishes the undecidability of the problem for IGLs as a corollary.

\subsection{Paper Organization.} Section \ref{sec:log-prelims} introduces IGLs and the grammar theoretic preliminaries needed to formulate our nested sequent systems. Section \ref{sec:nested-calculi} introduces our nested sequent systems and proves them sound and complete. Section \ref{sec:calc-properties} investigates the admissibility and invertibility properties of our systems, and provides a syntactic proof of cut-admissibility that is uniform over all IGLs we consider. Section \ref{sec:interpolation} presents our faithful embedding of CGLs into IGLs via a negative translation, establishing the undecidability of the general validity problem for IGLs. Finally, in Section \ref{sec:conclusion}, we conclude and discuss future work.

\section{Preliminaries}\label{sec:log-prelims}

\subsection{Syntax and Semantics.}


The language of each intuitionistic grammar logic is defined relative to an \emph{alphabet} $\albet$, which is a non-empty finite set of \emph{characters}, used to index modalities. As in~\cite{DemNiv05}, we stipulate that each alphabet $\albet$ is partitioned into a \emph{forward part} $\albetf := \{\chara, \charb, \charc, \ldots\}$ and a \emph{backward part} $\albetb := \{\conv{\chara}, \conv{\charb}, \conv{\charc}, \ldots\}$. That is to say, we assume the following holds for an alphabet $\albet$: (1) $\albet := \albetf \cup \albetb$, (2) $\albetf \cap \albetb = \emptyset$, and (3) $ \chara \in \albetf$ \iffi  $\conv{\chara} \in \albetb$. 

We refer to the elements $\chara$, $\charb$, $\charc$, \etc of $\albetf$ as \emph{forward characters} and the elements $\conv{\chara}$, $\conv{\charb}$, $\conv{\charc}$, \etc of $\albetb$ as \emph{backward characters}. A \emph{character} is defined to be either a forward or backward character, and we use $\charx$, $\chary$, $\charz$, \etc to denote them. In what follows, modalities indexed with forward characters will be interpreted as making reference to future states along the accessibility relation within a relational model, and modalities indexed with backward characters will make reference to past states. 

We define the \emph{converse operation} to be a function $\conv{\cdot}$ mapping each forward character $\chara \in \albetf$ to its \emph{converse} $\conv{\chara} \in \albetb$ and vice versa. Hence, the converse operation is its own inverse, i.e. for any $\charx \in \albet$, $\charx = \conv{\conv{\charx}}$. 

Let us fix an alphabet $\albet$ for the remainder of the text. We let $\prop := \{p, q, r, \ldots\}$ be a denumerable set of \emph{propositional atoms} and define the language $\lang$ relative to a given alphabet $\albet$ via the following grammar in BNF:
$$
A ::= p \mid \bot \mid A \lor A \mid A \land A \mid A \iimp A \mid \xdia A \mid \xbox A
$$
where $p$ ranges over the set $\prop$ of propositional atoms and $\charx$ ranges over the characters in the alphabet $\albet$. We use $A$, $B$, $C$, \etc to denote formulae in $\lang$ and define $\top := \bot \iimp \bot$, $\inot A := A \iimp \bot$, and $A \leftrightarrow B := (A \iimp B) \land (B \iimp A)$ as usual. We define the \emph{length} of a formula $A$, denoted $\ell(A)$, to be the number of symbols contained in $A$.

Formulae are interpreted over \emph{bi-relational $\albet$-frames and models}~\cite{Lyo21b}, which we refer to simply as \emph{frames} and \emph{models}. These frames and models are inspired by those for intuitionistic modal and tense logics~\cite{BovDov84,Dov85,Ewa86,PloSti86}.

\begin{definition}[Frame]\label{def:bi-relational-frame} A \emph{frame} is a tuple $F = (W, \leq, \{R_{\charx} \mid \charx \in \albet\})$ that satisfies the following conditions:
\begin{itemize}

\item $W$ is a non-empty set of \emph{worlds} $\{w, u, v, \ldots\}$;

\item The \emph{intuitionistic relation} $\leq \ \subseteq W \times W$ is a pre-order, i.e., it is reflexive and transitive;

\item The \emph{accessibility relation} $R_{\charx} \subseteq W \times W$ satisfies:

\begin{itemize}

\item[(F1)] For all $w, w', v \in W$, if $w \leq w'$ and $w R_{\charx} v$, then there exists a $v' \in W$ such that $w' R_{\charx} v'$ and $v \leq v'$;

\item[(F2)] For all $w, v, v' \in W$, if $w R_{\charx} v$ and $v \leq v'$, then there exists a $w' \in W$ such that $w \leq w'$ and $w' R_{\charx} v'$;

\item[(F3)] $w R_{\charx} u$ \ifandonlyif $u R_{\conv{\charx}} w$.

\end{itemize}
\end{itemize}
\end{definition}

\begin{definition}[Model]\label{def:bi-relational-model} We define a \emph{model} based on a frame $F$ to be a pair $M = (F, V)$ such that:
$V : W \to 2^{\prop}$ is a \emph{valuation function} satisfying the \emph{monotonicity condition}: (M) for each $w, u \in W$, if $w \leq u$, then $V(w) \subseteq V(u)$.
\end{definition}

\begin{remark} The (F1) condition is implied by the (F2) and (F3) conditions, and the (F2) condition is implied by the (F1) and (F3) conditions. In other words, we need only impose $\{(F1),(F3)\}$ or $\{(F2),(F3)\}$ on a model to characterize our logics. We mention both conditions (F1) and (F2) however since we make explicit use of both conditions later on.
\end{remark}

\begin{figure}[t]\label{fig:f1-f2}

\begin{center}
\begin{tabular}{c @{\hskip 3em} c}
\xymatrix@=2em{
w'\ar@{.>}[rr]|-{R_{\charx}}  & & v' \\
 & (F1) & \\
w\ar[uu]|-{\leq}\ar[rr]|-{R_{\charx}} & & v\ar@{.>}[uu]|-{\leq}
}

&

\xymatrix@=2em{
w'\ar@{.>}[rr]|-{R_{\charx}}  & & v' \\
 & (F2) & \\
w\ar@{.>}[uu]|-{\leq}\ar[rr]|-{R_{\charx}} & & v\ar[uu]|-{\leq}
}
\end{tabular}
\end{center}

\caption{Depictions of the (F1) and (F2) conditions imposed on models. Dotted arrows indicate the relations implied by the presence of the solid arrows.}
\end{figure}

The (F1) and (F2) conditions are depicted in \fig~\ref{fig:f1-f2} and ensure the monotonicity of complex formulae (see \lem~\ref{lem:persistence}) in our models, which is a property characteristic of intuitionistic logics.\footnote{For a discussion of these conditions and related literature, see~\cite[\cptr~3]{Sim94}.} If an accessibility relation $R_{\chara}$ is indexed with a forward character, then we interpret it as a relation to \emph{future} worlds, and if an accessibility relation $R_{\conv{\chara}}$ is indexed with a backward character, then we interpret it as a relation to \emph{past} worlds. Hence, our formulae and related models have a tense character, showing that IGLs generalize the intuitionistic tense logics of Ewald~\cite{Ewa86}.


\begin{definition}[Semantic Clauses]
\label{def:semantic-clauses} Let $M = (W, \leq, \{R_{\charx} \mid \charx \in \albet\}, V)$ be a model with $w \in W$. The \emph{satisfaction relation} $M,w \sat A$ between $w \in W$ of $M$ and a formula $A \in \lang$ is inductively defined as follows:

\begin{itemize}

\item $M,w \sat p$ \ifandonlyif $p \in V(w)$, for $p \in \prop$;

\item $M,w \not\sat \bot$;

\item $M,w \sat A \lor B$ \ifandonlyif $M,w \sat A$ or $M,w \sat B$;

\item $M,w \sat A \land B$ \ifandonlyif $M,w \sat A$ and $M,w \sat B$;


\item $M,w \sat A \iimp B$ \ifandonlyif for all $u \in W$, if $w \leq u$ and $M,u \sat A$, then $M,u \sat B$;

\item $M,w \sat \xdia A$ \ifandonlyif there exists a $u \in W$ such that $w R_{\charx} u$ and $M,u \sat A$;

\item $M,w \sat \xbox A$ \ifandonlyif for all $u, v \in W$, if $w \leq u$ and $u R_{\charx} v$, then $M,v \sat A$;

\item $M \sat A$ \iffi for all $w \in W$, $M, w \sat A$.

\end{itemize}
A formula $A$ is \emph{valid} relative to a class of frames $\frames$ \iffi for all models $M$ based on a frame $F \in \frames$, $M \sat A$.
\end{definition}

\begin{lemma}[General Monotonicity]\label{lem:persistence}
Let $M$ be a model with $w,u \in W$ of $M$. If $w \leq u$ and $M, w \sat A$, then $M, u \sat A$.
\end{lemma}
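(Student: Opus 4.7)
The plan is to proceed by induction on the structure of the formula $A$, using the semantic clauses of Definition~\ref{def:semantic-clauses} together with the frame conditions of Definition~\ref{def:bi-relational-frame}. Fix a model $M$ and worlds $w \leq u$.

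For the base cases, if $A = p \in \prop$, then $M, w \sat p$ means $p \in V(w)$, so the monotonicity condition (M) on the valuation directly gives $p \in V(u)$, hence $M, u \sat p$; and if $A = \bot$, there is nothing to prove since $M, w \not\sat \bot$. The conjunction and disjunction cases follow immediately from the induction hypothesis (IH) applied componentwise. For $A = B \iimp C$, I would assume $M, w \sat B \iimp C$ and take an arbitrary $v$ with $u \leq v$ and $M, v \sat B$; by transitivity of $\leq$ we have $w \leq v$, so the satisfaction clause at $w$ yields $M, v \sat C$, whence $M, u \sat B \iimp C$. The box case $A = \xbox B$ is analogous: any pair $u', v$ witnessing $u \leq u'$ and $u' R_{\charx} v$ also satisfies $w \leq u'$ by transitivity, so $M, v \sat B$ follows directly from the clause at $w$ (no IH needed here since the clause already threads an $\leq$-step through its universal quantifier).

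The only nontrivial case, and the main obstacle, is the diamond case $A = \xdia B$. Here $M, w \sat \xdia B$ gives some $v \in W$ with $w R_{\charx} v$ and $M, v \sat B$, but I need a witness accessible from $u$, not from $w$. This is precisely where the frame condition (F1) comes in: from $w \leq u$ and $w R_{\charx} v$ it produces a $v' \in W$ with $u R_{\charx} v'$ and $v \leq v'$. Applying the IH to $v \leq v'$ and $M, v \sat B$ yields $M, v' \sat B$, and together with $u R_{\charx} v'$ this establishes $M, u \sat \xdia B$.

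Conceptually, (F1) is the forward zig-zag that compensates for the existential quantifier in the diamond clause, exactly as in the standard treatment of persistence for intuitionistic modal logics; the (F2) condition plays no role in this lemma (it becomes relevant only when arguing about converse modalities or soundness of rules that introduce past accessibility on the right). Thus the proof is essentially a routine structural induction, with (F1) carrying the entire weight of the diamond case.
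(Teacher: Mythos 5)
Your proof is correct and follows exactly the route the paper intends: the paper's proof is simply stated as ``induction on the length of $A$,'' and your expansion supplies the right details, in particular invoking condition (M) for atoms, transitivity of $\leq$ for the $\iimp$ and $\xbox$ clauses, and (F1) together with the induction hypothesis for the $\xdia$ case. Your observation that (F1) alone carries the diamond case (with (F2) not needed here) is also consistent with the paper's Remark that either of $\{(\mathrm{F1}),(\mathrm{F3})\}$ or $\{(\mathrm{F2}),(\mathrm{F3})\}$ suffices to characterize the logics.
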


\begin{proof}
By induction on the length of $A$.
\end{proof}

\subsection{Intuitionistic Grammar Logics.}

The base logic $\ikm$ is defined 
by means of the axiom system $\h\ikm$ given in \dfn~\ref{def:axiomatization} below. We remark that we refer to the axiom system as $\h\ikm$, using the prefix $\h$ to denote the fact that the axiom system is a \emph{Hilbert calculus}.

\begin{definition}[Axiom System $\h\ikm$]\label{def:axiomatization} We define the axiom system $\h\ikm$ below, where we have an axiom and inference rule for each $\charx \in \albet$.
\begin{multicols}{2}
\begin{itemize}

\item[A0] Axioms for intuitionistic logic 

\item[A1] $\xbox (A \iimp B) \iimp (\xbox A \iimp \xbox B)$

\item[A2] $\xbox (A \land B) \ieq (\xbox A \land \xbox B)$

\item[A3] $\xdia (A \lor B) \ieq (\xdia A \lor \xdia B)$

\item[A4] $\xbox (A \iimp B) \iimp (\xdia A \iimp \xdia B)$

\item[A5] $(\xbox A \land \xdia B) \iimp \xdia (A \land B)$

\item[A6] $\inot \xdia \bot$ 

\item[A7] $(A \iimp \xbox \xdiac A) \land (\xdia \xboxc A \iimp A)$

\item[A8] $(\xdia A \iimp \xbox B) \iimp \xbox (A \iimp B)$

\item[A9] $\xdia (A \iimp B) \iimp (\xbox A \iimp \xdia B)$

\item[R0] \AxiomC{$A$}\AxiomC{$A \iimp B$}\RightLabel{$\mathsf{mp}$}\BinaryInfC{$B$}\DisplayProof

\item[R1] \AxiomC{$A$}\RightLabel{$\nec$}\UnaryInfC{$\xbox A$}\DisplayProof

\end{itemize}
\end{multicols}
\end{definition}

\begin{remark} We note that if we let $\albet := \{a,\conv{a}\}$, then $\ikm$ is just Ewald's intuitionistic tense logic $\ikt$~\cite{Ewa86}, which is a conservative extension of the mono-modal intuitionistic modal logic $\ik$~\cite{Fis84,PloSti86}. 
\end{remark}

In this paper, we also consider extensions of $\ikm$ 
with two different classes of axioms, where $\charx, \charx_{1}, \ldots, \charx_{n} \in \albet$: 
\begin{itemize}

\item \emph{seriality axioms} $\xbox A \iimp \xdia A$ 

\item \textit{intuitionistic path axioms} $(\diap{\charx_{1}} \cdots \diap{\charx_{n}} A \iimp \diap{\charx} A) \land (\boxp{\charx} A \iimp \boxp{\charx_{1}} \cdots \boxp{\charx_{n}} A)$

\end{itemize}
We let $\axs$ denote a finite set of axioms of the above forms and refer to intuitionistic path axioms as \emph{IPAs}. 
We note that the collection of $\ipa$s subsumes the class of Horn-Scott-Lemmon axioms~\cite{Lyo21a} and includes multi-modal and intuitionistic variants of standard axioms such as $\T$, $\B$, $\four$, and $\five$:
\begin{itemize}

\item[$\T$] $(A \iimp \xdia A) \land (\xbox A \iimp A)$

\item[$\four$] $(\xdia \xdia A \iimp \xdia A) \land (\xbox A \iimp \xbox \xbox A)$

\item[$\B$] $(\xdiac A \iimp \xdia A) \land (\xbox A \iimp \xboxc A)$

\item[$\five$] $(\xdiac \xdia A \iimp \xdia A) \land (\xbox A \iimp \xboxc \xbox A)$

\end{itemize}

\begin{definition}[$\h\ikm(\axs)$] We define $\h\ikm(\axs) := \h\ikm \cup \axs$, and define the \emph{intuitionistic grammar logic} $\ikm(\axs)$ to be the smallest set of formulae from $\lang$ closed under substitutions of the axioms from $\h\ikm(\axs)$ and applications of the inference rules. A formula $A$ is defined to be a \emph{theorem} of $\ikm(\axs)$ \iffi $A \in \ikm(\axs)$. 
Given a set $\Gamma \subseteq \lang$ and formula $A$, we define $\Gamma \ent A$ \iffi there exist $B_{1}, \ldots, B_{n} \in \Gamma$ such that $B_{1} \land \cdots \land B_{n} \iimp A$ is a theorem of $\ikm(\axs)$.
\end{definition}

\begin{remark} The axiom system $\h\ikm = \h\ikm(\emptyset)$. 
\end{remark}

As stated above and proven in~\cite{Lyo21b}, each logic $\ikm(\axs)$ admits a semantic characterization in terms of classes of frames, determined by imposing frame conditions which take one of two different forms:
\begin{itemize}

\item \emph{seriality conditions} $\forall w \exists u (w R_{\charx} u)$

\item \emph{path conditions} $\forall \vec{w} (w_{0} R_{\charx_{1}} w_{1} \& \cdots \& w_{n-1} R_{\charx_{n}} w_{n} \Rightarrow w_{0} R_{\charx} w_{n})$

\end{itemize}
We let $\vec{w} = w_{0}, \ldots, w_{n}$ in the path condition above. We say that each aforementioned seriality axiom is \emph{related} to the seriality condition of the above form, and each IPA is \emph{related} to a path condition of the above form. 

\begin{definition}[$\axs$-Valid] We define an \emph{$\axs$-frame} to be a frame satisfying each frame condition related to an axiom $A \in \axs$. An \emph{$\axs$-model} is a model based on an $\axs$-frame. A formula $A$ is defined to be \emph{$\axs$-valid}, written $\asat A$, \iffi $A$ is valid relative to the class of $\axs$-frames. For a set $\Gamma \subseteq \lang$, we define $\Gamma \asat A$ \iffi for every $\axs$-model $M$ and for all worlds $w \in W$ of $M$, if $M, w \sat B$ for all $B \in \Gamma$, then $M, w \sat A$.
\end{definition}

The following soundness and completeness result is proven in~\cite{Lyo21b}.


\begin{theorem}[Soundness and Completeness]\label{thm:sound-complete-logic} 
$\Gamma \ent A$ \iffi $\Gamma \asat A$.
\end{theorem}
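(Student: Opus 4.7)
The plan is to prove both directions separately, via techniques standard in the proof theory of bi-relational intuitionistic modal and tense logics. For soundness ($\Gamma \ent A \Rightarrow \Gamma \asat A$), I would induct on the length of a derivation of $A$ from $\Gamma$ in $\h\ikm(\axs)$, verifying that every axiom is $\axs$-valid and that both inference rules preserve $\axs$-validity. The intuitionistic schemas (A0) and modus ponens are routine consequences of Lemma~\ref{lem:persistence} and the semantic clauses for the propositional connectives. Axioms A1--A6 verify in the usual $\ik$ style, invoking (F1) and (F2) to transport witnesses along $\leq$. Axiom A7, the tense-style converse duality, is where (F3) is essential, and the Fischer-Servi axioms A8 and A9 require both (F1) and (F2) in an essential way. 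For each seriality axiom, validity is immediate from the existential frame condition $\forall w \exists u (w R_{\charx} u)$; for each IPA, unfolding the semantic clauses for $\xdia$ and $\xbox$ reduces validity to the corresponding path condition, which holds in every $\axs$-frame by definition. Necessitation (R1) preserves validity through a direct unfolding of the clause for $\xbox$.

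For completeness ($\Gamma \asat A \Rightarrow \Gamma \ent A$), I would carry out a canonical-model construction adapted to the bi-relational, multi-modal setting, in the spirit of Simpson~\cite{Sim94} and Ewald~\cite{Ewa86}. The worlds of $\canmod$ are \emph{prime theories}, i.e., sets $\Delta$ of formulae that are deductively closed under $\ent$, consistent, and satisfy the disjunction property ($A \lor B \in \Delta$ implies $A \in \Delta$ or $B \in \Delta$). The intuitionistic pre-order $\canirel$ is set inclusion; the valuation $\canv$ sends $\Delta$ to $\Delta \cap \prop$; and for each character $\charx \in \albet$ the accessibility relation is defined by $\Delta R_{\charx}^{C} \Delta'$ iff $\{A : \xbox A \in \Delta\} \subseteq \Delta'$ and $\{\xdia A : A \in \Delta'\} \subseteq \Delta$. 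The workhorse is a Lindenbaum-style Existence Lemma: whenever $\xdia A \in \Delta$, one can extend $\{A\} \cup \{B : \xbox B \in \Delta\}$ to a prime theory $\Delta'$ witnessing $\Delta R_{\charx}^{C} \Delta'$, with a dual statement for refuting $\xbox A$. A routine induction on formula complexity then delivers the Truth Lemma, and extending any non-theorem to a prime theory that omits $A$ furnishes the required countermodel.

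The main obstacle will be verifying that the canonical frame genuinely belongs to the class of $\axs$-frames, i.e., that it satisfies (F1)--(F3) together with the seriality and path conditions associated with $\axs$. Conditions (F1) and (F2) are extracted from axioms A8 and A9 via the Existence Lemma; (F3) falls out of axiom A7 once one uses its two directions to swap boxed/diamonded formulae across converse characters in the definition of $R_{\charx}^{C}$. Each seriality condition is read off directly from its axiom by producing a witness prime theory. The most delicate case is the path condition: given a canonical path $\Delta_{0} R_{\charx_{1}}^{C} \Delta_{1} \cdots R_{\charx_{n}}^{C} \Delta_{n}$, I would prove $\Delta_{0} R_{\charx}^{C} \Delta_{n}$ by iterating the $\xbox A \iimp \boxp{\charx_{1}} \cdots \boxp{\charx_{n}} A$ direction of the IPA to chase $\xbox$-formulae from $\Delta_{0}$ down the path, yielding $\{A : \xbox A \in \Delta_{0}\} \subseteq \Delta_{n}$; the dual inclusion follows symmetrically from the $\xdia$-direction of the IPA. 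Assembling these pieces, the contrapositive of the Truth Lemma then closes completeness.
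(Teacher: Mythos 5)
The first thing to note is that the paper contains no proof of Theorem~\ref{thm:sound-complete-logic} at all: the result is imported from~\cite{Lyo21b} with the single remark that it is proven there, so there is no in-paper argument to measure your proposal against. Judged on its own terms, your strategy is the standard and correct one for bi-relational semantics---induction on Hilbert derivations for soundness, and a canonical model of prime theories with $\leq$ taken to be inclusion and $\Delta R^{C}_{\charx} \Delta'$ defined by the usual two inclusions for completeness. The ingredients you single out are the right ones: A7 does yield (F3) in the canonical frame, each path condition does follow by chasing the $\xbox$-half of the IPA forwards and the $\xdia$-half backwards along a canonical path (including the $n=0$ reflexivity case), and seriality is witnessed using A6 together with the box-unpacking of $\Delta$.

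Three places where the sketch is thinner than the work it stands for. First, already on the soundness side, validating the $\boxp{\charx}A \iimp \boxp{\charx_{1}}\cdots\boxp{\charx_{n}}A$ half of an IPA is not a mere unfolding: the semantic clause for $\xbox$ produces a zigzag $w \leq u_{1} R_{\charx_{1}} v_{1} \leq u_{2} R_{\charx_{2}} v_{2} \leq \cdots$, which must first be straightened into a genuine $R_{\charx_{1}} \circ \cdots \circ R_{\charx_{n}}$ path issuing from some $u \geq w$ by repeated appeal to (F2) before the path condition can be invoked. Second, every Lindenbaum-style extension in your completeness argument must preserve the side constraint $\set{\xdia A : A \in \Delta'} \subseteq \Delta$ while achieving the disjunction property; this is exactly where A3 (and A2, for closing the box-unpacking under conjunction) is indispensable, and it goes unmentioned. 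Third, refuting $\xbox A$ at $\Delta$ requires manufacturing \emph{two} prime theories simultaneously---a $\leq$-successor $\Delta'$ of $\Delta$ together with an $R^{C}_{\charx}$-successor of $\Delta'$ omitting $A$---and it is in this double saturation, as well as in verifying (F1)/(F2) for the canonical frame, that A4, A5, A8 and A9 collectively do their real work; attributing (F1)/(F2) to A8/A9 alone is optimistic. None of these is a fatal gap, and all are standard for this family of logics, but they are precisely where the content of the cited proof lives, so a complete write-up would have to supply them.
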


\subsection{Grammar-Theoretic Preliminaries.} 

As will be seen later on, the definition of our nested sequent systems crucially depends upon the use of certain inference rules (viz. propagation rules) parameterized by formal grammars. Therefore, the current section introduces the grammar-theoretic notions required to properly define such rules.

We let $\concat$ be the \emph{concatenation operation} with $\varepsilon$ the \emph{empty string}. We define the set $\albet^{*}$ of \emph{strings over $\albet$} to be the smallest set such that (1) $\albet \cup \{\varepsilon\} \subseteq \albet^{*}$, and (2) $\text{If } \stra \in \albet^{*} \text{ and } \charx \in \albet \text{, then } \stra \concat \charx \in \albet^{*}$. We use $\stra$, $\strb$, $\strc$, \etc (potentially annotated) to represent strings in $\albetstr$. Also, we will often write the concatenation of two strings $\stra$ and $\strb$ as $\stra \cate \strb$ as opposed to $\stra \concat \strb$, and for the empty string we have $\stra \cate \empstr = \empstr \cate \stra = \stra$. The converse operation on strings is defined accordingly: (1)  $\conv{\varepsilon} := \varepsilon$, and (2) $\text{If } \stra = \charx_{1} \cdots \charx_{n} \text{, then } \conv{\stra} := \conv{\charx}_{n} \cdots \conv{\charx}_{1}$.











To simplify the presentation of formulae in the sequel, we define strings in modalities as follows: if $\stra = \charx_{1} \cdots \charx_{n}$, then $[ \stra ] A :=  [ \charx_{1} ] \cdots [ \charx_{n} ] A$ and $\langle \stra \rangle A := \langle \charx_{1} \rangle \cdots \langle \charx_{n} \rangle A$, with $\strabox A = \stradia A = A$ when $\stra = \empstr$. Hence, every $\ipa$ may be written in the form
$$
(\langle \stra \rangle A \iimp \langle \charx \rangle A) \land ([ \charx ] A \iimp [\stra] A)
$$
where $\langle \stra \rangle = \langle \charx_{1} \rangle \cdots \langle \charx_{n} \rangle$ and $[ \stra ] = [ \charx_{1} ] \cdots [ \charx_{n} ]$. We make use of this notation to compactly define \emph{$\axs$-grammars}, which are types of \emph{Semi-Thue systems}~\cite{Pos47}, encoding information contained in a set $\axs$ of axioms, and employed later on in the definition of propagation rules. 

\begin{definition}[$\axs$-grammar]\label{def:grammar} An \emph{$\axs$-grammar} is a set $\g{\axs}$ such that:
\begin{center}
$(\charx \pto \stra), (\conv{\charx} \pto \conv{\stra}) \in \g{\axs}$ \iffi $(\langle \stra \rangle A \iimp \langle \charx \rangle A) \land ([ \charx ] A \iimp [\stra] A) \in \axs$.
\end{center}
We call rules of the form $\charx \pto \stra$ \emph{production rules}, where $\charx \in \albet$ and $\stra \in \albetstr$.
\end{definition}


An $\axs$-grammar $\g{\axs}$ is a type of string re-writing system. For example, if $\charx \pto \stra \in \g{\axs}$, we may derive the string $\strb \cate \stra \cate \strc$ from $\strb \cate \charx \cate \strc$ in one-step by applying the mentioned production rule. Through repeated applications of production rules to a given string $\stra \in \albetstr$, one derives new strings, the collection of which, determines a language. Let us make such notions precise by means of the following definition:

\begin{definition}[Derivation]\label{def:gram-derivation} Let $\g{\axs}$ be an $\axs$-grammar. The \emph{one-step derivation relation} $\osdr$ holds between two strings $\stra$ and $\strb$ in $\albetstr$, written $\stra \osdr \strb$, \iffi there exist $\stra', \strb' \in \albetstr$ and $\charx \pto \strc \in \g{\axs}$ such that $\stra = \stra' \cate \charx \cate \strb'$ and $\strb = \stra' \cate \strc \cate \strb'$.  The \emph{derivation relation} $\dr$ is defined to be the reflexive and transitive closure of $\osdr$. For two strings $\stra, \strb \in \albetstr$, we refer to $\stra \dr \strb$ as a \emph{derivation of $\strb$ from $\stra$}, and define its \emph{length} to be equal to the minimal number of one-step derivations needed to derive $\strb$ from $\stra$ in $\g{\axs}$. 
\end{definition}

\begin{definition}[Language]\label{def:gram-language} Let $\g{\axs}$ be an $\axs$-grammar. For a character $\charx \in \albet$, the \emph{language of $\charx$ relative to $\g{\axs}$} is defined to be the set $\glang(\charx) := \{\stra \ | \ \charx \dr \stra \}$. When the $\axs$-grammar $\g{\axs}$ is clear from the context, we write $L_{\charx}$ to denote $\glang(\charx)$.
\end{definition}

The following lemma will be helpful in the sequel. It follows from the definition of $\g{\axs}$, namely, because for each production rule $\charx \pto \stra \in \g{\axs}$ there is a production rule $\conv{\charx} \pto \conv{\stra} \in \g{\axs}$.

\begin{lemma}\label{lem:string-and-inverse} 
$\stra = \charx_{1} \cdots \charx_{n} \in \glang(\charx)$ \iffi \ $\conv{\stra} = \conv{\charx}_{n} \cdots \conv{\charx}_{1} \in \glang(\conv{\charx})$. 
\end{lemma}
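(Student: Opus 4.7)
The plan is to prove both directions by induction on the length of the derivation, exploiting the symmetry of $\g{\axs}$ under the converse operation, which is built directly into \dfn~\ref{def:grammar}: every rule $\charx \pto \stra \in \g{\axs}$ is accompanied by $\conv{\charx} \pto \conv{\stra} \in \g{\axs}$.

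I would first prove the forward direction: assuming $\stra \in \glang(\charx)$, i.e., $\charx \dr \stra$, I proceed by induction on the length of this derivation. In the base case the derivation has length $0$, so $\stra = \charx$, from which $\conv{\stra} = \conv{\charx}$, and the empty derivation witnesses $\conv{\stra} \in \glang(\conv{\charx})$. For the inductive step, pick some intermediate string $\stra'$ so that $\charx \dr \stra'$ in one fewer step and $\stra' \osdr \stra$. By \dfn~\ref{def:gram-derivation}, there exist $\stra_{1}', \stra_{2}' \in \albetstr$ and a production rule $\chary \pto \strc \in \g{\axs}$ with $\stra' = \stra_{1}' \cate \chary \cate \stra_{2}'$ and $\stra = \stra_{1}' \cate \strc \cate \stra_{2}'$. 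Taking converses yields $\conv{\stra'} = \conv{\stra_{2}'} \cate \conv{\chary} \cate \conv{\stra_{1}'}$ and $\conv{\stra} = \conv{\stra_{2}'} \cate \conv{\strc} \cate \conv{\stra_{1}'}$; since $\conv{\chary} \pto \conv{\strc} \in \g{\axs}$ by \dfn~\ref{def:grammar}, we obtain $\conv{\stra'} \osdr \conv{\stra}$. Combining with the inductive hypothesis $\conv{\charx} \dr \conv{\stra'}$ gives $\conv{\charx} \dr \conv{\stra}$, as required.

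The reverse direction follows by symmetry: applying the forward direction to $\conv{\stra} \in \glang(\conv{\charx})$ yields $\conv{\conv{\stra}} \in \glang(\conv{\conv{\charx}})$, which reduces to $\stra \in \glang(\charx)$ since the converse operation is involutive on both characters and strings (the latter because reversing twice and applying $\conv{\cdot}$ twice to each character returns the original string).

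I do not anticipate any serious obstacle; the only points that require attention are (i) bookkeeping of how the converse operation interacts with concatenation, namely $\conv{\stra_{1} \cate \stra_{2}} = \conv{\stra_{2}} \cate \conv{\stra_{1}}$, which follows directly from the definition given just before the lemma, and (ii) being explicit that the involutivity of $\conv{\cdot}$ on $\albet$ (guaranteed by condition (3) on alphabets) lifts to $\albetstr$, so the symmetry argument in the reverse direction is valid.
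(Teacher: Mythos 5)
Your proof is correct and follows exactly the idea the paper itself invokes: the paper offers no detailed proof, only the observation that the lemma ``follows from the definition of $\g{\axs}$, namely, because for each production rule $\charx \pto \stra \in \g{\axs}$ there is a production rule $\conv{\charx} \pto \conv{\stra} \in \g{\axs}$.'' Your induction on derivation length, together with the bookkeeping of how $\conv{\cdot}$ reverses concatenation, is precisely the intended elaboration of that remark.
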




\section{Nested Sequent Systems}\label{sec:nested-calculi}


Let $\Gamma$, $\Delta$, $\Sigma$, $\ldots$ be finite multisets of formulae from $\lang$. We define a \emph{Gentzen sequent} to be an expression of the form $\Gamma \sar \Delta$ such that $|\Delta| \leq 1$. We define a \emph{nested sequent} to be a \emph{right-empty} or \emph{right-filled} nested sequent, which are simultaneously defined as follows:
\begin{enumerate}

\item[(1)] A Gentzen sequent $\Gamma \sar \Delta$ with $|\Delta| = 0$ is a right-empty nested sequent, and a Gentzen sequent $\Gamma \sar \Delta$ with $|\Delta| = 1$ is a right-filled nested sequent;

\item[(2)] if $\Gamma \sar \Delta$ is right-empty (or, right-filled) and $\ns_{i}$ is a right-empty nested sequent for each $i \in [n]$, then $\Gamma \sar \Delta, (\charx_{1})[\ns_{1}], \ldots, (\charx_{n})[\ns_{n}]$ is a right-empty (right-filled, resp.) nested sequent;

\item[(3)] if $\Gamma \sar \Delta$ is right-empty, $\ns_{j}$ is a right-filled nested sequent for a unique $j \in [n]$, and $\ns_{i}$ is a right-empty nested sequent for each $i \in [n] \setminus \set{j}$, then $\Gamma \sar \Delta, (\charx_{1})[\ns_{1}], \ldots, (\charx_{n})[\ns_{n}]$ is a right-filled nested sequent.

\end{enumerate}
We use the symbols $\ns$, $\nsii$, $\nsiii$, $\ldots$ to denote nested sequents. For 
a nested sequent of the form $\Gamma \sar \Delta, (\charx_{1})[\ns_{1}], \ldots, (\charx_{n})[\ns_{n}]$, we refer to $\Gamma$ as the \emph{antecedent} and refer to $\Delta, (\charx_{1})[\ns_{1}], \ldots, (\charx_{n})[\ns_{n}]$ as the \emph{consequent}. 

We define a \emph{component} of the nested sequent $\ns =  \Gamma \sar \Delta, (\charx_{1})[\nsii_{1}], \ldots, (\charx_{n})[\nsii_{n}]$ to be a Gentzen sequent appearing in $\ns$, that is, a component of $\ns$ is an element of the multiset $c(\ns)$, where $c$ is defined as follows and $\uplus$ denotes multiset union:
$$
c( \Gamma \sar \Delta, (\charx_{1})[\nsii_{1}], \ldots, (\charx_{n})[\nsii_{n}]) := \{ \Gamma \sar \Delta \} \uplus \! \biguplus_{i \in [n]} \! c(\nsii_{i}).
$$
We let $\names := \set{w,u,v,\ldots}$ be a denumerable set of pairwise distinct labels, called \emph{names}. For each nested sequent $\ns$ of the form $\Gamma \sar \Delta, (\charx_{1})[\nsii_{1}], \ldots, (\charx_{n})[\nsii_{n}]$, we call $\Gamma \sar \Delta$ the \emph{root} of $\ns$ and assume that every component of $\ns$ is assigned a unique name from $\names$. The incorporation of names in our nested sequents is crucial for the definition of our propagation rules below. We define $\names(\ns)$ to be the set of all names assigned to components in $\ns$.

We define the \emph{input formulae} and \emph{output formulae} of a nested sequent $\ns$ to be all formulae occurring in antecedents of components (shown below left) and consequents of components (shown below right), respectively:
$$
\infs{\ns} := \!\!\!\!\!\!\!\! \bigcup_{(\Gamma \sar \Delta) \in c(\ns)} \!\!\!\!\!\!\!\! \Gamma
\qquad
\outfs{\ns} := \!\!\!\!\!\!\!\! \bigcup_{(\Gamma \sar \Delta) \in c(\ns)} \!\!\!\!\!\!\!\! \Delta
$$
Observe that $\outfs{\ns} = \emptyset$ if $\ns$ is right-empty, and $\outfs{\ns} = \set{A}$ if $\ns$ is right-filled; in the latter case, we simply write $\outfs{\ns} = A$. 
For a name $w$ of a component $\Gamma \sar \Delta$ in a nested sequent $\ns$, we define $\infs{w,\ns} := \Gamma$ and $\outfs{w,\ns} := \Delta$. 
Following the notation of~\cite{KuzStr19,Str13}, we define $\ns^{\da}$ to be the nested sequent $\ns$ with the output formula (if it exists) removed.

Observe that every nested sequent encodes a \emph{tree} of (named) Gentzen sequents.

\begin{definition} Let $\ns =  \Gamma \sar \Delta, (\charx_{1})[\nsii_{1}], \ldots, (\charx_{n})[\nsii_{n}]$ be a nested sequent with $w$ the name of its root and $u_{i}$ the name of the root of $\nsii_{i}$ for $i \in [n]$. We recursively define the tree $\tr(\ns) = (T,E)$ as follows:
$$
T = \{(w,\Gamma \sar \Delta)\} \cup \bigcup_{i = 1}^{n} T_{i}
\qquad
E = \{(w,\charx_{i},u_{i}) \ | \ i \in [n]\} \cup \bigcup_{i = 1}^{n} E_{i}
$$
such that $\tr(\nsii_{i}) = (T_{i},E_{i})$ for $i \in [n]$. We will often write $(u,\Sigma \sar \Pi) \in \tr(\ns)$ and $(u,\charx,v) \in \tr(\ns)$ to mean that $(u,\Sigma\sar \Pi) \in T$ and $(u,\charx,v) \in E$, respectively.
\end{definition}

We sometimes refer to a component as a \emph{$w$-component} if $w$ is the name of that component, and we use the notation $\ns\{\nsii_{1}\}_{w_{1}}\ldots\{\nsii_{n}\}_{w_{n}}$ to indicate that $\nsii_{i}$ is `rooted at' the $w_{i}$-component of $\ns$ for $i \in [n]$. Often, we will more simply write $\ns\{\nsii_{1}\}\ldots\{\nsii_{n}\}$ if the names of the components are not of relevance. To reduce notational complexity, we may \emph{not} write $\nsii_{i}$ 
in its entirety in $\ns\{\nsii_{1}\}\ldots\{\nsii_{n}\}$, but may only display the root and/or some children of the root. For example, $\ns\{A, B \sar \emptyset\}$, $\ns\set{C, D \sar \emptyset}\set{I \sar J}$, and $\ns\set{C, D \sar \emptyset, (\chary)[I \sar J]}$ are all acceptable representations of the nested sequent $\ns = A, B \sar \emptyset, (\charx)[C, D \sar \emptyset, (\chary)[I \sar J]]$.



\begin{definition}[Sequent Semantics]\label{def:sequent-semantics} Let $\ns$ be a nested sequent with $w$ the name of the root, $\tr(\ns) = (T,E)$, and $M = (W, \leq, \{R_{\charx} \ | \ \charx \in \albet\}, V)$ be an $\axs$-model. We define an \emph{$M$-interpretation} to be a function $\mint : \names \rightarrow W$. We say that $\ns$ is satisfied on $M$ with $\mint$, written $M, \mint \asat \ns$, \iffi if condition (1) holds, then (2) holds, where conditions (1) and (2) are listed below:
\begin{itemize}

\item[(1)] for each $(u,\charx,v) \in E$, $\mint(u)R_{\charx}\mint(v)$;

\item[(2)] for some $(v,\Gamma \sar \Delta) \in T$, $M, \mint(v) \asat \bigwedge \Gamma \iimp \bigvee \Delta$. 

\end{itemize}
We say that $\ns$ is \emph{$\axs$-valid} \iffi for every $\axs$-model $M$ and $M$-interpretation $\mint$, $M, \mint \asat \ns$; we say that $\ns$ is \emph{$\axs$-invalid} otherwise.
\end{definition}



\begin{figure}[t]
\noindent

\begin{center}
\begin{tabular}{c c} 
\AxiomC{}
\RightLabel{$\id$}
\UnaryInfC{$\ns \sbl \Gamma, p \sar p \sbr$}
\DisplayProof

&

\AxiomC{}
\RightLabel{$\botl$}
\UnaryInfC{$\ns \sbl \Gamma, \bot \sar \Delta \sbr$}
\DisplayProof
\end{tabular}
\end{center}


\begin{center}
\begin{tabular}{c c c}
\AxiomC{$\ns \sbl \Gamma, A \sar \Delta \sbr$}
\AxiomC{$\ns \sbl \Gamma, B \sar \Delta \sbr$}
\RightLabel{$\disl$}
\BinaryInfC{$\ns \sbl \Gamma, A \lor B \sar \Delta \sbr$}
\DisplayProof

&

\AxiomC{$\ns \sbl \Gamma \sar A_{i} \sbr$}
\RightLabel{$\disr~\text{s.t.}~i \in \set{1,2}$}
\UnaryInfC{$\ns \sbl \Gamma \sar A_{1} \lor A_{2} \sbr$}
\DisplayProof
\end{tabular}
\end{center}


\begin{center}
\begin{tabular}{c c c}
\AxiomC{$\ns \sbl \Gamma, A, B \sar \Delta \sbr$}
\RightLabel{$\conl$}
\UnaryInfC{$\ns \sbl \Gamma, A \land B \sar \Delta \sbr$}
\DisplayProof

&

\AxiomC{$\ns \sbl \Gamma \sar A \sbr$}
\AxiomC{$\ns \sbl \Gamma \sar B \sbr$}
\RightLabel{$\conr$}
\BinaryInfC{$\ns \sbl \Gamma \sar A \land B \sbr$}
\DisplayProof
\end{tabular}
\end{center}


\begin{center}
\begin{tabular}{c c}
\AxiomC{$\ns^{\downarrow} \sbl \Gamma, A \iimp B \sar A \sbr$}
\AxiomC{$\ns \sbl \Gamma, B \sar \Delta \sbr$}
\RightLabel{$\iimpl$}
\BinaryInfC{$\ns \sbl \Gamma, A \iimp B \sar \Delta \sbr$}
\DisplayProof

&

\AxiomC{$\ns\sbl \Gamma, A \sar B \sbr$}
\RightLabel{$\iimpr$}
\UnaryInfC{$\ns \sbl \Gamma \sar A \iimp B \sbr$}
\DisplayProof
\end{tabular}
\end{center}


\begin{center}
\begin{tabular}{c c}
\AxiomC{$\ns \sbl \Gamma \sar \emptyset \sbr_{w} \sbl \Sigma \sar A \sbr_{u}$}
\RightLabel{$\xdiar^{\dag(\axs)}$}
\UnaryInfC{$\ns \sbl \Gamma \sar \xdia A \sbr_{w} \sbl \Sigma \sar \emptyset \sbr_{u}$}
\DisplayProof

&

\AxiomC{$\ns \sbl \Gamma, \xbox A \sar \Delta \sbr_{w} \sbl \Sigma, A \sar \Pi \sbr_{u}$}
\RightLabel{$\xboxl^{\dag(\axs)}$}
\UnaryInfC{$\ns \sbl \Gamma, \xbox A \sar \Delta \sbr_{w} \sbl \Sigma \sar \Pi \sbr_{u}$}
\DisplayProof
\end{tabular}
\end{center}


\begin{center}
\begin{tabular}{c @{\hskip .1cm} c @{\hskip .1cm} c}
\AxiomC{$\ns \sbl \Gamma \sar \Delta, (\charx) [ A \sar \emptyset ] \sbr$}
\RightLabel{$\xdial$}
\UnaryInfC{$\ns \sbl \Gamma, \xdia A \sar \Delta \sbr$}
\DisplayProof

&

\AxiomC{$\ns \sbl \Gamma \sar (\charx) [ \emptyset \sar A ] \sbr$}
\RightLabel{$\xboxr$}
\UnaryInfC{$\ns \sbl \Gamma \sar \xbox A \sbr$}
\DisplayProof

&

\AxiomC{$\ns \sbl \Gamma \sar \Delta, (\charx)\bl \empseq \sar \empseq \br \sbr$}
\RightLabel{$\ddr$}
\UnaryInfC{$\ns \sbl \Gamma \sar \Delta \sbr$}
\DisplayProof
\end{tabular}
\end{center}


\begin{flushleft}
\textbf{Side Conditions:}\\
$\dag(\axs):= w \prpath{L_{\charx}} u$ with $L_{\charx} = \glang(\charx)$.
\end{flushleft}

\caption{The nested sequent system $\calc$ defined relative to the set $\axs$ of axioms. Note that $\ddr \in \calc$ \iffi $\D \in \axs$.\label{fig:nested-calculus}}
\end{figure}

A uniform presentation of our nested sequent systems is provided in \fig~\ref{fig:nested-calculus}. Each nested sequent system $\calc$ takes a set $\axs$ of axioms as a parameter and is sound and complete for the logic $\ikm(\axs)$ (see Theorems~\ref{thm:soundness-nested} and~\ref{thm:completeness-nested} below). We note that  $\mathsf{NIK_{m}} = \mathsf{NIK_{m}}(\emptyset)$ is the calculus for the base logic $\ikm$. Each system contains the \emph{initial rules} $\id$ and $\botl$, and with the exception of $\ddr$, the remaining rules are \emph{logical rules} forming left and right pairs. 

As stated above, the parameter $\axs$ determines what logic is captured by the system $\calc$. This is achieved by (1) adding the \emph{structural rule} $\ddr$ to $\calc$ \iffi $\D \in \axs$ and/or (2) changing the functionality of the propagation rules $\xdiar$ and $\xboxl$. Propagation rules are special in that they view nested sequents as automata, and enable formulae to be (bottom-up) propagated along certain paths (corresponding to strings generated by a $\albet$-system) within the tree $\tr(\ns)$ of a nested sequent $\ns$. To make the functionality of such rules precise, we define \emph{propagation graphs} and \emph{propagation paths} (cf.~\cite{CiaLyoRamTiu21,GorPosTiu11}).

\begin{definition}[Propagation Graph]\label{def:propagation-graph} Let $\ns$ be a nested sequent with $\tr(\ns) = (T,E)$. We define the \emph{propagation graph} $\prgr{\ns} = (\prv,\pre)$ such that 

\begin{itemize}

\item[(1)] $\prv := \names(\ns)$;

\item[(2)] $(w, \charx, u), (u, \conv{\charx}, w) \in \pre$ \iffi $(w,\charx,u) \in E$.


\end{itemize}
We will often write $w \in \prgr{\ns}$ and $(w,\charx,u) \in \prgr{\ns}$ as a shorthand for $w \in \prv$ and $(w,\charx,u) \in \pre$, respectively.
\end{definition}

\begin{definition}[Propagation Path]\label{def:propagation-path} Let $\prgr{\ns} = (\prv,\pre)$ be a propagation graph with $u,w \in \prv$ and $\charx \in \albet$. We write $\prgr{\ns} \models u \prpath{\charx} w$ \iffi $(u,\charx,w) \in \pre$. Given a string $\charx\stra \in \albetstr$, 
we define $\prgr{\ns} \models u \prpath{\charx\stra} w$ as `$\exists_{v \in \prv} \ \prgr{\ns}  \models u\prpath{\charx} v$ and $\prgr{\ns} \models v \prpath{\stra}w$', and we take $\prgr{\ns} \models u \prpath{\empstr} w$ to mean that $u = w$. Additionally, when $\prgr{\ns}$ is clear from the context we may simply write $u \prpath{\stra} w$ to express $\prgr{\ns} \models u \prpath{\stra} w$. Finally, given a language $L_{\charx} := \glang(\charx)$ of some $\axs$-system $\g{\axs}$ and $\stra \in \albetstr$, we write $u \prpath{L_{\charx}} w$ \iffi there is a string $\strb \in \glang(\charx)$ with $u \prpath{\strb} w$.
\end{definition}

The following is a consequence of \lem~\ref{lem:string-and-inverse}.

\begin{lemma}\label{lem:reversing-paths}
Let $\ns$ be a nested sequent, $L_{\charx} = \glang(\charx)$, and $L_{\conv{\charx}} = \glang(\conv{\charx})$. Then, $\prgr{\ns} \models u \prpath{L_{\charx}} w$ \iffi $\prgr{\ns} \models w \prpath{L_{\conv{\charx}}} u$.
\end{lemma}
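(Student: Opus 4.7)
The plan is to reduce the statement to a simpler fact about single strings, and then invoke \lem~\ref{lem:string-and-inverse} to transfer membership between $L_{\charx}$ and $L_{\conv{\charx}}$. The two ingredients we need are (i) the edge-symmetry built into propagation graphs by \dfn~\ref{def:propagation-graph}, namely $(w,\charx,u) \in \pre$ \iffi $(u,\conv{\charx},w) \in \pre$, and (ii) the grammar-level symmetry of \lem~\ref{lem:string-and-inverse}, which tells us that $\strb \in \glang(\charx)$ \iffi $\conv{\strb} \in \glang(\conv{\charx})$.

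First I would establish the following auxiliary claim: for every string $\strb = \charx_{1} \cdots \charx_{n} \in \albetstr$ and all $u,w \in \prv$, we have $\prgr{\ns} \models u \prpath{\strb} w$ \iffi $\prgr{\ns} \models w \prpath{\conv{\strb}} u$, where $\conv{\strb} = \conv{\charx}_{n} \cdots \conv{\charx}_{1}$. The proof proceeds by induction on $n = \len{\strb}$. The base case $n=0$ gives $\strb = \conv{\strb} = \empstr$, and both sides reduce to $u = w$ by \dfn~\ref{def:propagation-path}. For the inductive step, unfold $\prgr{\ns} \models u \prpath{\charx_{1}\charx_{2}\cdots\charx_{n}} w$ to obtain some $v \in \prv$ with $\prgr{\ns} \models u \prpath{\charx_{1}} v$ and $\prgr{\ns} \models v \prpath{\charx_{2}\cdots\charx_{n}} w$. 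By the symmetry of $\pre$ we get $\prgr{\ns} \models v \prpath{\conv{\charx}_{1}} u$, and by the induction hypothesis $\prgr{\ns} \models w \prpath{\conv{\charx}_{n}\cdots\conv{\charx}_{2}} v$. Concatenating gives $\prgr{\ns} \models w \prpath{\conv{\charx}_{n}\cdots\conv{\charx}_{1}} u$, as desired. The converse direction is obtained by swapping the roles of $\strb$ and $\conv{\strb}$ and noting that $\conv{\conv{\strb}} = \strb$.

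With the auxiliary claim in hand, the lemma follows immediately. If $\prgr{\ns} \models u \prpath{L_{\charx}} w$, then by \dfn~\ref{def:propagation-path} there is some $\strb \in \glang(\charx)$ with $\prgr{\ns} \models u \prpath{\strb} w$. The auxiliary claim yields $\prgr{\ns} \models w \prpath{\conv{\strb}} u$, and \lem~\ref{lem:string-and-inverse} gives $\conv{\strb} \in \glang(\conv{\charx}) = L_{\conv{\charx}}$, so $\prgr{\ns} \models w \prpath{L_{\conv{\charx}}} u$. The reverse implication is completely symmetric, using that the converse operation on strings (and on characters) is an involution.

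There is no serious obstacle here; the proof is essentially a diagram-chase combining two symmetries. The only point that requires mild care is bookkeeping the reversal of character order when inverting a string so that the inductive step in the auxiliary claim pieces together the reversed path in the correct order, but this is handled uniformly by peeling off the leftmost character of $\strb$ and appending its converse as the rightmost character of $\conv{\strb}$.
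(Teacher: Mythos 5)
Your proof is correct and follows exactly the route the paper intends: the paper gives no explicit argument, stating only that the lemma is a consequence of \lem~\ref{lem:string-and-inverse}, and your writeup supplies the two ingredients that remark presupposes, namely the edge symmetry built into \dfn~\ref{def:propagation-graph} (proved path-wise by induction on string length) and the transfer of language membership via \lem~\ref{lem:string-and-inverse}. Nothing is missing; the induction peeling the leftmost character and appending its converse on the right correctly produces the reversed string $\conv{\strb}$.
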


With the above notions, one can formally specify the propagation rules $\xdiar$ and $\xboxl$. Applications of such rules are controlled by the side condition $\dag(\axs)$ shown in \fig~\ref{fig:nested-calculus}. We provide an example to clarify applications of these rules.

\begin{example}\label{ex:propagation-graph-path} Let $\ns = \Gamma, p \sar \Delta, (\charx)[\Sigma \sar \Pi, (\conv{\chary})[\Phi, [z]p \sar \Psi]]$ with $w$ the name of the root, $u$ the name of $\Sigma \sar \Pi$, and $v$ the name of $\Phi, [z]p \sar \Psi$. A graphical depiction of $\ns$'s propagation graph $\prgr{\ns}$ is given below:
\begin{center}
\begin{minipage}[t]{.5\textwidth}
\xymatrix{
  w \ar@/^1pc/@{.>}[rr]|-{\charx} & &  u\ar@/^1pc/@{.>}[rr]|-{\conv{\chary}} \ar@/^1pc/@{.>}[ll]|-{\conv{\charx}} & &  v \ar@/^1pc/@{.>}[ll]|-{\chary}
}
\end{minipage}
\end{center}
Let $\axs := \{(\langle \chary \rangle \langle \conv{\charx} \rangle A \iimp \langle z \rangle A) \land ([z] A \iimp [\chary] [\conv{\charx}] A)\}$, so that the corresponding $\axs$-grammar is $\g{\axs} = \{z \pto \chary \conv{\charx}, \conv{z} \pto \conv{\chary} \charx \}$. Then, $\prgr{\ns} \models v \prpath{y\conv{x}} w$ with $\chary \conv{\charx} \in \glang(z)$ due to the first production rule of $\g{\axs}$. Therefore, the side condition of $\xboxl$ is satisfied, and so, we are permitted to (top-down) apply the propagation rule $\xboxl$ to $\ns$ to delete the formula $p$. This lets us derive the nested sequent $\Gamma \sar \Delta, (\charx)[\Sigma \sar \Pi, (\conv{\chary})[\Phi, [z]p \sar \Psi]]$.
\end{example}

As usual, we refer to formulae that are explicitly presented in the premises and conclusion of a rule as \emph{auxiliary} and \emph{principal}, respectively. For example, in the $\xboxr$ rule, $A$ is auxiliary and $\xbox A$ is principal. A \emph{derivation} in $\calc$ of a nested sequent $\ns$ is a (potentially infinite) tree whose nodes are labeled with nested sequents such that: (1) The root is labeled with $\ns$, and (2) Every parent node is the conclusion of a rule of $\calc$ with its children the premises.

We define a \emph{branch} $\branch = \ns_{0}, \ns_{1}, \ldots, \ns_{n}, \ldots$ to be a maximal path of nested sequents in a derivation such that $\ns_{0}$ is the conclusion of the derivation and each nested sequent $\ns_{i+1}$ (if it exists) is a child of $\ns_{i}$. A \emph{proof} is a finite derivation where all leaves are instances of initial rules. We use $\prf$ and annotated versions thereof to denote both derivations and proofs with the context differentiating the usage. If a proof of a nested sequent $\ns$ exists in $\calc$, then we write $\calc\proves\ns$ to indicate this. Often, when discussing a proof transformation, we will use the following notation to denote that $\prf$ is a proof of $\ns$: 
\begin{center}
\AxiomC{$\dotprf{\prf}$}
\UnaryInfC{$\ns$}
\DisplayProof
\end{center}
The \emph{height} of a derivation is defined in the usual way as the length of a maximal branch in the derivation, which may be infinite if the derivation is infinite.

\begin{example} We provide an example proof of an IPA below. We suppose that the IPA shown below is an axiom in $\axs$, meaning $ \charx_{1}\cdots\charx_{n} \pto \charx \in \g{\axs}$ by definition. Observe that a propagation path tracing the word $\charx_{1}\cdots\charx_{n}$ exists in the left and right branches, showing that the side condition holds for $\xdiar$ and $\xboxl$ as $\charx_{1}\cdots\charx_{n} \in \glang(x)$.
\begin{center}
\resizebox{\textwidth}{!}{
\AxiomC{}
\RightLabel{$\id$}

\UnaryInfC{$\emptyset \sar (\charx_{1})[\ldots (\charx_{n})[p \sar p] \ldots]$}
\RightLabel{$\xdiar$}
\UnaryInfC{$\emptyset \sar \diap{\charx} p, (\charx_{1})[\ldots (\charx_{n})[p \sar \emptyset] \ldots]$}
\RightLabel{$\xdial \times n$}
\UnaryInfC{$\diap{\charx_{1}} \cdots \diap{\charx_{n}} p \sar \diap{\charx} p$}
\RightLabel{$\iimpr$}
\UnaryInfC{$\sar \diap{\charx_{1}} \cdots \diap{\charx_{n}} p \iimp \diap{\charx} p$}

\AxiomC{}
\RightLabel{$\id$}

\UnaryInfC{$\boxp{\charx} p \sar (\charx_{1})[\ldots (\charx_{n})[p \sar p] \ldots]$}
\RightLabel{$\xboxl$}
\UnaryInfC{$\boxp{\charx} p \sar (\charx_{1})[\ldots (\charx_{n})[\emptyset \sar p] \ldots]$}
\RightLabel{$\xboxr \times n$}
\UnaryInfC{$\boxp{\charx} p \sar \boxp{\charx_{1}} \cdots \boxp{\charx_{n}} p$}
\RightLabel{$\iimpr$}
\UnaryInfC{$\sar \boxp{\charx} p \iimp \boxp{\charx_{1}} \cdots \boxp{\charx_{n}} p$}

\RightLabel{$\conr$}
\BinaryInfC{$\sar (\diap{\charx_{1}} \cdots \diap{\charx_{n}} p \iimp \diap{\charx} p) \land (\boxp{\charx} p \iimp \boxp{\charx_{1}} \cdots \boxp{\charx_{n}} p)$}
\DisplayProof
}
\end{center}
\end{example}

\subsection{Soundness and Completeness}

We first prove the soundness of our nested systems, and afterward, discuss the proof of completeness. Given a model $M$ and string $\stra = \charx_{1} \cdots \charx_{n}$, we define the relation $R_{\stra} := R_{\charx_{1}} \circ \cdots \circ R_{\charx_{n}}$.

\begin{lemma}\label{lem:paths-implies-edge-in-model}
Let $M = (W, \leq, \{R_{\charx} \ | \ \charx \in \albet\},V)$ be an $\axs$-model. If 
$wR_{\stra}u$ and $\stra \in L_{\charx}$, then $w R_{\charx} u$. 
\end{lemma}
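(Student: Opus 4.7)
The plan is to induct on the length $k$ of the derivation $\charx \dr \stra$ in $\g{\axs}$, which exists because $\stra \in L_{\charx} = \glang(\charx)$. For the base case $k = 0$, we have $\stra = \charx$, in which case $w R_{\stra} u$ is literally $w R_{\charx} u$ and there is nothing further to show.

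For the inductive step, I would decompose a length-$(k{+}1)$ derivation as $\charx \dr \strd \osdr \stra$, where the initial portion has length $k$. By \dfn~\ref{def:gram-derivation}, the one-step derivation $\strd \osdr \stra$ is witnessed by some production rule $\chary \pto \strc \in \g{\axs}$ together with strings $\strd_{1}, \strd_{2} \in \albetstr$ such that $\strd = \strd_{1} \cate \chary \cate \strd_{2}$ and $\stra = \strd_{1} \cate \strc \cate \strd_{2}$. Unfolding $w R_{\stra} u$ via the definition $R_{\stra} = R_{\strd_{1}} \circ R_{\strc} \circ R_{\strd_{2}}$ produces worlds $v_{1}, v_{2} \in W$ with $w R_{\strd_{1}} v_{1}$, $v_{1} R_{\strc} v_{2}$, and $v_{2} R_{\strd_{2}} u$. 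By \dfn~\ref{def:grammar}, the production rule $\chary \pto \strc$ reflects an IPA of the form $(\langle \strc \rangle A \iimp \langle \chary \rangle A) \land ([\chary]A \iimp [\strc]A)$ (or its converse counterpart) sitting in $\axs$, whose related path condition is precisely $R_{\strc} \subseteq R_{\chary}$. Since $M$ is an $\axs$-model, this condition holds, hence $v_{1} R_{\chary} v_{2}$, which yields $w R_{\strd} u$. Applying the inductive hypothesis to the length-$k$ derivation $\charx \dr \strd$ then delivers $w R_{\charx} u$, closing the step.

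The main subtlety I anticipate is the treatment of converse production rules: by \dfn~\ref{def:grammar}, each IPA induces a pair $\chary \pto \strc$ and $\conv{\chary} \pto \conv{\strc}$, while the path conditions listed in the paper are expressed only in the ``forward'' form $R_{\strc} \subseteq R_{\chary}$. When the inductive step is triggered by the converse rule $\conv{\chary} \pto \conv{\strc}$, one needs the analogous inclusion $R_{\conv{\strc}} \subseteq R_{\conv{\chary}}$, which is not directly among the stipulated conditions. This is obtained from $R_{\strc} \subseteq R_{\chary}$ by a reversal-of-path argument using the (F3) condition, in the spirit of \lem~\ref{lem:reversing-paths}: each $R_{\conv{\charx_{i}}}$-link corresponds via (F3) to an $R_{\charx_{i}}$-link in the opposite direction, and the resulting $R_{\strc}$-path is collapsed to an $R_{\chary}$-edge by the forward condition, then flipped back by (F3). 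Beyond this piece of bookkeeping, the induction is essentially mechanical.
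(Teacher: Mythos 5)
Your proof is correct and follows essentially the same route as the paper: induction on the length of the derivation of $\stra$ in $\g{\axs}$, peeling off the last production step, collapsing the corresponding subpath via the related path condition, and closing with the inductive hypothesis. Your explicit treatment of converse production rules via (F3) is a welcome refinement of a point the paper's proof compresses into the phrase ``by the definition of $\g{\axs}$.''
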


\begin{proof} By induction on the length of the derivation of $\stra \in \glang(\charx)$. 

\textit{Base case.} For the base case, suppose the length of the derivation of $\stra$ is $0$. By \dfn~\ref{def:gram-language}, the only derivation in $\glang(\charx)$ of length $0$ is the derivation consisting solely of $\charx$. Hence, the claim trivially holds since $\stra = \charx$.

\textit{Inductive step.} Let the derivation of $\stra$ be of length $n+1$. By \dfn~\ref{def:gram-language}, there is a derivation $\charx \pto^{*}_{\g{\axs}} \strb \chary \strc$ of length $n$ and there exists a production rule $\chary \pto \charz_{1} \cdots \charz_{k} \in \g{\axs}$ such that $\stra = \strb \charz_{1} \cdots \charz_{k} \strc$. By our assumption that $wR_{\stra}u$, we know there exist $v, v' \in W$ such that $w R_{\strb} v$, $v R_{\charz_{1} \cdots \charz_{k}} v'$, and $v' R_{\strc} u$. By the definition of $\g{\axs}$, it follows that $v R_{\chary} v'$, and so, we have $w R_{\strb} v$, $v R_{\chary} v'$, and $v' R_{\strc} u$, i.e., $w R_{\strb \chary \strc} u$. Therefore, $w R_{\charx} u$ by IH.
\end{proof}

\begin{lemma}\label{lem:paths-imply-relation}
Let $M = (W, \leq, \{R_{\charx} \ | \ \charx \in \albet\},V)$ be an $\axs$-model and let $\mint$ be an $M$-interpretation. If $M, \mint \not\asat \ns$ and $\prgr{\ns} \models w \prpath{L_{\charx}} u$, then $\mint(w) R_{\charx} \mint(u)$.
\end{lemma}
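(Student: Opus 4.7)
The plan is to unpack the hypothesis $M,\mint \not\asat \ns$, translate the propagation path into a chain of $R$-edges in the model, and then invoke Lemma~\ref{lem:paths-implies-edge-in-model}.

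First, from $M,\mint \not\asat \ns$ and \dfn~\ref{def:sequent-semantics}, since failure of a conditional requires the antecedent to hold, condition (1) of the definition is satisfied: for every tree edge $(u',\charx',v') \in E$ of $\tr(\ns) = (T,E)$, we have $\mint(u') R_{\charx'} \mint(v')$ in $M$. Second, from $\prgr{\ns} \models w \prpath{L_{\charx}} u$, I unpack (using \dfn~\ref{def:propagation-path}) the existence of a string $\strb = \chary_{1} \cdots \chary_{n} \in L_{\charx}$ together with intermediate names $w = v_{0}, v_{1}, \ldots, v_{n} = u$ in $\prv$ such that $(v_{i-1},\chary_{i},v_{i}) \in \pre$ for each $i \in [n]$.

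The key step is to promote this propagation-graph path to a model-level relational chain $\mint(w) R_{\strb} \mint(u)$. I would do this by induction on $n$, treating each edge $(v_{i-1},\chary_{i},v_{i}) \in \pre$ via \dfn~\ref{def:propagation-graph}: either it arose directly from a tree edge $(v_{i-1},\chary_{i},v_{i}) \in E$, in which case condition (1) above gives $\mint(v_{i-1}) R_{\chary_{i}} \mint(v_{i})$ immediately; or it arose from a tree edge $(v_{i},\conv{\chary_{i}},v_{i-1}) \in E$ (using that $\conv{\conv{\chary_{i}}} = \chary_{i}$), in which case condition (1) gives $\mint(v_{i}) R_{\conv{\chary_{i}}} \mint(v_{i-1})$, and frame condition (F3) converts this into $\mint(v_{i-1}) R_{\chary_{i}} \mint(v_{i})$. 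Composing these, $\mint(w) R_{\strb} \mint(u)$. Finally, since $\strb \in L_{\charx}$, Lemma~\ref{lem:paths-implies-edge-in-model} yields $\mint(w) R_{\charx} \mint(u)$, as required.

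The only real subtlety is the translation of ``backward'' propagation-graph edges back into forward model edges: the propagation graph symmetrizes every tree edge by pairing $(u',\charx',v') \in E$ with $(v',\conv{\charx'},u') \in \pre$, so paths in $\prgr{\ns}$ may weave along and against the direction of $\tr(\ns)$, whereas condition (1) of \dfn~\ref{def:sequent-semantics} only speaks about the tree's edges. Frame condition (F3) is precisely what is needed to bridge this gap, and this is the only non-routine ingredient in the argument.
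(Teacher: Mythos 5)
Your proposal is correct and follows essentially the same route as the paper's proof, which simply extracts a string $\stra \in L_{\charx}$ witnessing the propagation path, reads off $\mint(w)\,R_{\stra}\,\mint(u)$ from Definition~\ref{def:sequent-semantics}, and concludes via Lemma~\ref{lem:paths-implies-edge-in-model}. The only difference is that you spell out the step the paper compresses into ``by Definition~\ref{def:sequent-semantics}''---namely that converse edges of the propagation graph are turned back into forward model edges via frame condition (F3)---which is a correct and welcome elaboration rather than a deviation.
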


\begin{proof} By assumption, there is a string $\stra \in L_{\charx}$ such that $\prgr{\ns} \models w \prpath{\stra} u$. By \dfn~\ref{def:sequent-semantics}, $\mint(w) R_{\stra} \mint(u)$, and so, $\mint(w) R_{\charx} \mint(u)$ by \lem~\ref{lem:paths-implies-edge-in-model}.
\end{proof}

\begin{theorem}[Soundness]\label{thm:soundness-nested} If $\calc \proves \ns$, then $\ns$ is $\axs$-valid.
\end{theorem}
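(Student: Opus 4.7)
The plan is to prove the theorem by induction on the height of the proof $\prf$ of $\ns$ in $\calc$. In the base case, where $\prf$ consists of a single application of $\id$ or $\botl$, the conclusion is trivially $\axs$-valid: in the $\id$ case the distinguished component $\Gamma, p \sar p$ validates the sequent at any world via the tautology $p \iimp p$, and in the $\botl$ case the component $\Gamma, \bot \sar \Delta$ is satisfied vacuously since $\bot$ is never true at any world. Hence condition (2) of \dfn~\ref{def:sequent-semantics} holds whenever condition (1) does.

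For the inductive step I proceed by case analysis on the last rule of $\prf$, arguing contrapositively: assuming the conclusion is $\axs$-invalid via some $\axs$-model $M$ and $M$-interpretation $\mint$ satisfying (1) but violating (2), I construct a pair $(M', \mint')$ falsifying one of the premises. For the propositional rules and the modal rules $\xdial$ and $\xboxr$, the same $M$ typically suffices, with $\mint'$ either equal to $\mint$ or extending $\mint$ to the fresh name introduced by $\xdial$ or $\xboxr$; the fresh name's image is chosen using frame condition (F1) or (F2) so that condition (1) continues to hold for the new edge. These arguments are routine once the semantic clauses of \dfn~\ref{def:semantic-clauses} are unfolded and general monotonicity (\lem~\ref{lem:persistence}) is applied, particularly for $\iimpl$. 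For the structural rule $\ddr$ (included precisely when $\D \in \axs$), the fresh child demanded by the premise is interpreted at an $R_\charx$-successor of $\mint(w)$, which exists by the seriality forced on $\axs$-frames by $\D$.

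The most delicate cases are the propagation rules $\xdiar$ and $\xboxl$, whose side condition $\dag(\axs) = w \prpath{L_\charx} u$ encodes a grammar-controlled path in the propagation graph $\prgr{\ns}$. Here the decisive tool is \lem~\ref{lem:paths-imply-relation}: because condition (1) forces $\mint$ to respect all tree edges, the path $w \prpath{L_\charx} u$ translates into an actual accessibility edge $\mint(w) R_\charx \mint(u)$ in $M$. Combining this with the semantic clauses for $\xdia$ and $\xbox$ and with frame conditions (F1) and (F2), I can transfer a falsifying pair for the conclusion to a falsifying pair for the premise. I expect the principal obstacle to lie precisely in this transfer step, where intuitionistic persistence along $\leq$ must be reconciled with the single-conclusion structure of the nested sequents; this may require readjusting the images $\mint'(w)$ and $\mint'(u)$ of the two distinguished components by means of (F1) and (F2) before concluding that a particular premise component is falsified. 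Once this step is carried out, all remaining cases follow by direct unfolding of the definitions.
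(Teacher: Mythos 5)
Your proposal is correct and follows essentially the same route as the paper: a rule-by-rule contrapositive argument (an $\axs$-invalid conclusion yields an $\axs$-invalid premise), with \lem~\ref{lem:paths-imply-relation} discharging the grammar side condition on the propagation rules and conditions (F1)/(F2) together with \lem~\ref{lem:persistence} handling the re-interpretation of names when a world is shifted along $\leq$. The only difference is one of emphasis: in the paper the delicate interpretation-readjustment arises in the $\iimpr$ (and $\xboxr$) cases rather than in the propagation rules, where \lem~\ref{lem:paths-imply-relation} makes the transfer essentially immediate.
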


\begin{proof} It is straightforward to show that all instances of $\id$ and $\botl$ are valid. Therefore, let us argue soundness by showing that the other rules of $\calc$ are sound, i.e., if the conclusion is $\axs$-invalid, then at least one premise is $\axs$-invalid. We present the $\iimpr$ and $\xdiar$ cases; the remaining cases are simple or similar. Let $M = (W, \leq, \{R_{\charx} \ | \ \charx \in \albet\},V)$ be an $\axs$-model with $\mint$ an $M$-interpretation.


$\iimpr$. Let $\nsii := \ns \sbl \Gamma \sar A \iimp B \sbr_{w}$ be the conclusion of an $\iimpr$ instance and suppose $M, \mint \not\asat \nsii$. Then, $M, \mint(w) \not\sat A \iimp B$, which implies that there exists a world $u \in W$ such that $\mint(w) \leq u$, $M, u \sat A$, and $M, u \not\sat B$. We now define a new interpretation $\mint'$ such that $M, \mint' \not\sat \ns \sbl \Gamma, A \sar B \sbr_{w}$. 


First, we set $\mint'(w) = u$. Second, we let $v_{1}, \ldots, v_{n}$ be (the names of) all children of $w$ (if they exist) with $v_{0}$ the (name of the) parent of $w$ (if it exists) in $\nsii$. Let $(w,\charx_{i},v_{i}) \in E$ for $i \in [n]$ and $(v_{0},\chary,w) \in E$ for $\tr(\nsii) = (T,E)$. Since $\mint(w) R_{\charx_{i}} \mint(v_{i})$ and $\mint(v_{0}) R_{\chary} \mint(w)$ hold in $M$ and $\mint(w) \leq u$, we know there exist $v_{i}'$ such that $\mint(v_{i}) \leq v_{i}'$ and $uR_{\charx_{i}}v_{i}'$ by condition (F1), and there exists a $v_{0}'$ such that $\mint(v_{0}) \leq v_{0}'$ and $v_{0}'R_{\chary}u$ by condition (F2) (see \dfn~\ref{def:bi-relational-model} for these conditions). Thus, if we set $\mint'(v_{i}) = v_{i}'$ and $\mint'(v_{0}) = v_{0}'$, then it follows that $\mint'(w) R_{\charx_{i}} \mint'(v_{i})$ and $\mint'(v_{0}) R_{\chary} \mint'(w)$. Moreover, since $M, \mint \sat \Gamma_{i}$ and $M, \mint \sat \Sigma$ for $(v_{i},\Gamma_{i} \sar \Delta_{i}), (v_{0},\Sigma \sar \Pi) \in T$, it follows that $M, \mint' \sat \Gamma_{i}$ and $M, \mint' \sat \Sigma$ by \lem~\ref{lem:persistence} as $\mint(v_{i}) \leq \mint'(v_{i})$ and $\mint(v_{0}) \leq \mint'(v_{0})$. We successively repeat this process until all components of $\nsii$ have been processed and $\mint'$ is fully defined over $\ns \sbl \Gamma, A \sar B \sbr_{w}$. One can confirm that $M, \mint' \not\sat \ns \sbl \Gamma, A \sar B \sbr_{w}$, showing the premise of $\iimpr$ $\axs$-invalid.

$\xdiar$. Suppose that $\nsii = \ns \sbl \Gamma \sar \xdia A, \Delta \sbr_{w} \sbl \Sigma \sar \Pi \sbr_{u}$ is $\axs$-invalid. By assumption, there exists an $\axs$-model $M$ and an $M$-interpretation $\mint$ such that $M, \mint \not\sat \nsii$, which further implies that $M, \mint(w) \not\sat \xdia A$. By the side condition on $\xdiar$, we know that $\prgr{\nsii} \models w \prpath{L_{\charx}} u$. By \lem~\ref{lem:paths-imply-relation}, $\mint(w)R\mint(u)$. Hence, $M, \mint(u) \not\sat A$, showing $M, \mint \not\sat \ns \sbl \Gamma \sar \Delta \sbr_{w} \sbl \Sigma \sar A, \Pi \sbr_{u}$.
\end{proof}

To prove completeness, one can show that all axioms in $\h\ikma$ are provable in $\calc$ and all rules of $\h\ikma$ are admissible (i.e., if the premises are provable, then the conclusion is provable). These proofs are a straightforward exercise, and so, we omit them. We note that \emph{modus ponens} ($\mathsf{mp}$) is admissible using the $\cut$ rule introduced in Section~\ref{sec:calc-properties}, which is itself an admissible rule (see \thm~\ref{thm:cut-elim}).

\begin{theorem}[Completeness]\label{thm:completeness-nested}  If $A$ is $\axs$-valid, then $\calc \proves A$.
\end{theorem}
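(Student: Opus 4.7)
The plan is to reduce nested completeness to the Hilbert-style completeness (Theorem~\ref{thm:sound-complete-logic}) by simulating Hilbert derivations inside $\calc$. Assuming $\asat A$, Theorem~\ref{thm:sound-complete-logic} applied with $\Gamma = \emptyset$ yields a derivation of $A$ in $\h\ikma$. It then suffices to establish (a) every axiom $B$ of $\h\ikma$ is provable in $\calc$ as $\empseq \sar B$, and (b) the Hilbert inference rules $\mathsf{mp}$ and $\nec$ are admissible in $\calc$ when acting on end-sequents of that form. Induction on the length of the Hilbert derivation then delivers $\calc \proves \empseq \sar A$.

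For (a), I would proceed by axiom class. The intuitionistic axioms A0 reduce to instances of $\id$ by iterated applications of the logical rules. The modal axioms A1--A9 admit short, uniform derivations built from matched applications of $\xboxr$/$\xboxl$ and $\xdial$/$\xdiar$; in particular the converse axiom A7 uses the built-in symmetry of Definition~\ref{def:propagation-graph}, by which $(w,\charx,u) \in \pre$ iff $(u,\conv{\charx},w) \in \pre$, so that a single propagation step through the adjacent child validates the $\dag(\axs)$ side condition. For each seriality axiom $\xbox A \iimp \xdia A$ with $\D \in \axs$, I would apply $\iimpr$ and then $\ddr$ to open a fresh $(\charx)$-child, whence $\xboxl$ and $\xdiar$ propagate $A$ across the new edge and the branch closes via $\id$. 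For each IPA in $\axs$, the required derivation is exactly the schematic proof exhibited in the example immediately preceding this subsection; its side conditions are justified by Definition~\ref{def:grammar}, since the production rule $\charx \pto \charx_{1} \cdots \charx_{n} \in \g{\axs}$ places $\charx_{1} \cdots \charx_{n} \in \glang(\charx)$.

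For (b), necessitation is handled by a single bottom-up application of $\xboxr$, which reduces the goal $\empseq \sar \xbox A$ to $\empseq \sar (\charx)[\empseq \sar A]$; the latter is obtained from any proof of $\empseq \sar A$ by a routine height-preserving structural embedding that places the existing derivation underneath a fresh child component (formally, an instance of a modal-weakening/component-introduction admissibility lemma). Modus ponens follows from admissibility of the $\cut$ rule (Theorem~\ref{thm:cut-elim}, proved syntactically in Section~\ref{sec:calc-properties}): a single $\cut$ on $A$ between proofs of $\empseq \sar A$ and $\empseq \sar A \iimp B$, followed by cut-elimination, yields the required proof of $\empseq \sar B$.

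The main obstacle is not this completeness argument itself, which is an axiom-by-axiom verification combined with two standard admissibility reductions, but rather cut-admissibility, on which admissibility of $\mathsf{mp}$ entirely rests. That is the technical heart of Section~\ref{sec:calc-properties} and is deferred to Theorem~\ref{thm:cut-elim}; once it is in hand, completeness follows at once by the simulation described above.
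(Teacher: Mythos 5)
Your proposal is correct and follows essentially the same route as the paper: the paper likewise reduces completeness to Theorem~\ref{thm:sound-complete-logic} by showing every axiom of $\h\ikma$ provable in $\calc$ (the IPA case being exactly the displayed example proof) and every Hilbert rule admissible, with $\mathsf{mp}$ handled via the admissible $\cut$ rule of Theorem~\ref{thm:cut-elim} and necessitation via the hp-admissible $\nec$ rule of Figure~\ref{fig:structural-rules}. The paper leaves the axiom-by-axiom verification as a "straightforward exercise," so your proposal is in fact somewhat more detailed than the published argument.
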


\subsection{Separation and Related Systems.}\label{subsec:separation-related-logics} Our nested sequent systems also satisfy the \emph{separation property}, originally introduced in the context of nested calculi for tense logics~\cite{GorPosTiu11}. A calculus~$\calc$ has the separation property if and only if, whenever $\calc \proves A$ and the formula $A$ uses only modalities indexed with characters from $\albet' \subseteq \albet$, there exists a proof of $A$ in~$\calc$ that employs nested sequents using only characters drawn from $\albet'$, i.e., that do not mention any symbols in $\albet \setminus \albet'$. This property is significant in our setting because it entails that IGLs are conservative over IMLs. 

Let us restrict $\albet = \set{\chara}$ to a single character and define $\dia := \langle \chara \rangle$ and $\Box := [\chara]$. Moreover, we define $(\dia^{n}\Box A \iimp \Box^{k} A) \land (\dia^{k} A \iimp \Box^{n} \dia A), (\Box A \iimp \dia A) \in m(\axs)$ and $\Box A \iimp \dia A \in m(\axs)$ \iffi $(\langle \conv{\chara} \rangle^{n}\langle \chara \rangle^{k}A \iimp \langle \chara \rangle A) \land ([\chara] A \iimp [\conv{\chara}]^{n} [\chara]^{k}A), ([\chara] A \iimp \langle \chara \rangle A) \in \axs$. We refer to axioms of the latter form as \emph{Horn-Scott-Lemmon axioms (HSL axioms)} (cf.~\cite{Lyo21a}). In such a scenario, the calculus $\calc$ becomes a notational variant of the calculus $\mathsf{NIK}(m(\axs))$ presented in~\cite{Lyo21a} that is sound and complete for the extension of $\ik$ with the axioms $m(\axs)$. 

\begin{theorem}
$\ikma$ is a conservative extension of $\ik \cup m(\axs)$, i.e., if $\ent A$ and all modalities of $A$ are indexed by $\chara$, then $A$ is a theorem of $\ik \cup m(\axs)$.
\end{theorem}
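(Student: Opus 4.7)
The plan is to combine the separation property of $\calc$ with the observation (made in the preceding paragraph) that when the alphabet collapses to $\{\chara,\conv{\chara}\}$, $\calc$ is notationally the Lyon~\cite{Lyo21a} calculus $\mathsf{NIK}(m(\axs))$, which is sound and complete for $\ik \cup m(\axs)$. Starting from $\ent A$ with every modality of $A$ indexed by $\chara$, I would first apply completeness (Theorem~\ref{thm:completeness-nested}) to obtain some proof $\prf$ of $A$ in $\calc$, and then invoke separation to replace $\prf$ with a proof $\prf'$ whose nested sequents mention only the character $\chara$---that is, no $(\conv{\chara})[\cdot]$ child and no $\conv{\chara}$-indexed modal subformula occurs anywhere in $\prf'$.

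The justification of the separation step is a direct case analysis on rule applications: the only rules that introduce modal tree structure, namely $(\xdial)$ and $(\xboxr)$, place a child whose label is taken from the principal formula, while the propagation rules $(\xdiar)$ and $(\xboxl)$ merely migrate existing formulas and introduce no new modal characters. Hence any $\conv{\chara}$ occurrence in $\prf$ can be traced back to a $\conv{\chara}$-indexed subformula of $A$, contradicting the hypothesis; such occurrences can therefore be systematically excised from $\prf$. Once $\prf'$ is restricted in this way, it is literally (up to notation) a derivation in $\mathsf{NIK}(m(\axs))$: each IPA in $\axs$ corresponds via the $m(\cdot)$ translation to a unique HSL axiom, and the side condition $w \prpath{L_{\charx}} u$ of the propagation rules is governed by the same grammar $\g{\axs}$ in both calculi, with backward $\conv{\chara}$-edges of $\prgr{\prf'}$ being the reverses of already-present $\chara$-edges rather than a separate structural component. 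Soundness of $\mathsf{NIK}(m(\axs))$ for $\ik \cup m(\axs)$ from~\cite{Lyo21a} then yields that $A$ is a theorem of $\ik \cup m(\axs)$.

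The principal obstacle is the separation step, which, despite looking surface-level obvious from the shape of the rules, requires care when $\axs$-grammar productions introduce $\conv{\chara}$-letters into the strings labelling propagation paths. The invariant I would maintain is that while such $\conv{\chara}$-letters may appear in propagation-path \emph{labels}, they only traverse backward edges of $\prgr{\prf}$, which by Definition~\ref{def:propagation-graph} are automatically present as reverses of the $\chara$-edges in any sequent tree built only from $\chara$-children; hence no $\conv{\chara}$-labelled tree child or $\conv{\chara}$-modal subformula ever needs to appear in $\prf'$. Everything beyond this invariant is routine translation between notational variants of the same nested sequent calculus.
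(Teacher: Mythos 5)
Your proposal follows exactly the paper's argument: apply completeness (Theorem~\ref{thm:completeness-nested}) to get a proof in $\calc$, invoke the separation property to obtain a proof mentioning only $\chara$, and observe that this proof is a notational variant of one in $\mathsf{NIK}(m(\axs))$, whence $A$ is a theorem of $\ik \cup m(\axs)$. The only difference is that you additionally sketch a justification of the separation step (including the correct observation that $\conv{\chara}$-letters in propagation-path labels traverse only the automatically-present backward edges of the propagation graph), which the paper simply asserts without proof.
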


\begin{proof} Suppose $\ent A$ and all modalities of $A$ are indexed by $\chara$. By completeness, $\calc \proves A$ and by the separation property we know that a proof $\prf$ exists in $\calc$ that only uses the single character $\chara$ in nested sequents. Hence, the proof is a notational variant of a proof in $\mathsf{NIK}(m(\axs))$ for the logic $\ik \cup m(\axs)$, showing that $A$ is a theorem of $\ik \cup m(\axs)$.
\end{proof}

Moreover, if we fix $\albet := \set{\chara, \conv{\chara}}$, and define $\mathsf{F} := \langle \chara \rangle$, $\mathsf{P} := \langle \conv{\chara} \rangle$, $\mathsf{G} := [\chara]$, and $\mathsf{H} := [\conv{\chara}]$, then $\calc$ becomes a sound and complete calculus for Ewald's tense logic $\mathsf{IKt}$~\cite{Ewa86} extended with the axioms in $\axs$. We refer to such a logic as an \emph{intuitionistic tense logic (ITL)} and refer to a set $\axs$ of axioms within the signature $\albet := \set{\chara, \conv{\chara}}$ as \emph{intuitionistic tense path axioms (ITP axioms)}.

As a final observation, we also remark that our systems are single-conclusioned (i.e., intuitionistic) variants of the nested sequent systems for classical grammar logics, introduced by Tiu et al.~\cite{TiuIanGor12}. If we relax the single-conclusioned restriction imposed on nested sequents in $\calc$, this gives rise to a nested sequent calculus that is sound and complete for the classical grammar logic $\km$ extended with the axioms in $\axs$. In the classical setting, each IPA $(\langle \stra \rangle A \iimp \langle \charx \rangle A) \land ([ \charx ] A \iimp [\stra] A)$ is equivalent to the \emph{path axiom} $\langle \stra \rangle A \iimp \langle \charx \rangle A$ (cf.~\cite{TiuIanGor12}). 

\section{Proof-Theoretic Properties}\label{sec:calc-properties}

In this section, we show that each nested calculus $\calc$ satisfies a wide array of fundamental properties, which we leverage in our proof of syntactic cut-admissibility. In particular, we show that various rules are \emph{(height-preserving) admissible}, meaning if the premises of the rule have proofs (of height $h_{1}, \ldots, h_{n}$), then the conclusion of the rule has a proof (of height $h \leq \max\{h_{1}, \ldots, h_{n}\}$). With the exception of $\iimpl$, we also prove all left rules of $\calc$ \emph{height-preserving invertible}. If we let $\ru^{-1}_{i}$ be the $i$-inverse of the rule $\ru$ whose conclusion is the $i^{th}$ premise of the $n$-ary rule $\ru$ and premise is the conclusion of $\ru$, then we say that $\ru$ is \emph{(height-preserving) invertible} \iffi $\ru^{-1}_{i}$ is (height-preserving) admissible for each $i \in [n]$. (NB. We refer to a height-preserving admissible and height-preserving invertible rule as \emph{hp-admissible} and \emph{hp-invertible}, respectively.)

\begin{figure}[t]
\centering

\begin{tikzpicture}[>=stealth']
  \node (d) at (0, 0.9)  {$\ew$};
  \node (a) at (0,-0.9)  {$\wkl$};

  \node (b) at (3, 0)    {Lem.~\ref{lem:hp-inver}};

  \node (c) at (6, 0.9)  {$\ctrl$};
  \node (g) at (6,-0.9)  {$\ec$};

  \draw[->] (d) -- (b);
  \draw[->] (a) -- (b);

  \draw[->] (b) -- (c);
  \draw[->] (b) -- (g);

  \draw[->] (g) to[bend right=25] (c);
\end{tikzpicture}


\caption{A diagram depicting which hp-admissibility (and hp-invertibility) results are sufficient to prove others. An arrow from one rule name to another indicates that the hp-admissibility of the source rule is sufficient to prove the hp-admissibility of the target rule. Note that `Lem.~\ref{lem:hp-inver}' indicates the hp-invertibility of certain rules in $\calc$.\label{fig:admiss-invert-dependencies}}


\end{figure}

We will show all rules in \fig~\ref{fig:structural-rules} hp-admissible in $\calc$, i.e., we prove the \emph{right bottom rule} $\botr$, \emph{necessitation rule} $\nec$, \emph{left weakening rule} $\wkl$, \emph{right weakening rule} $\wkr$, \emph{external weakening rule} $\ew$, \emph{left contraction rule} $\ctrl$, \emph{external contraction rule} $\ec$, and \emph{shift rule} $\shift$ hp-admissible. These hp-admissibility results are crucial for proving cut-admissibility. As we will see below, various hp-admissibility results will be used to prove further structural rules hp-admissible. To make the dependencies between these various lemmas clear, we have provided a diagram in \fig~\ref{fig:admiss-invert-dependencies} that displays which hp-admissible rules are sufficient to prove other rules hp-admissible; rules omitted from the diagram are hp-admissible of their own accord.

The shift rule $\shift$ is a novel contribution of this paper and is interesting as it unifies reasoning with IPAs in a single rule. The $\shift$ rule utilizes the $\odot$ operation, which is defined as follows: Let $\ns := \Gamma \sar \Delta, (\charx_{1})[\nsii_{1}], \ldots, (\charx_{n})[\nsii_{n}]$ and $\nsiii := \Sigma \sar \Pi, (\charx_{n{+}1})[\nsiv_{n{+}1}], \ldots, (\charx_{n{+}k})[\nsiv_{n{+}k}]$ be nested sequents, then 
$$
\ns \odot \nsiii := \Gamma, \Sigma \sar \Delta, \Pi, (\charx_{1})[\nsii_{1}], \ldots, (\charx_{n})[\nsii_{n}], (\charx_{n{+}1})[\nsiv_{n{+}1}], \ldots, (\charx_{n{+}k})[\nsiv_{n{+}k}].
$$
This rule streamlines our proofs because it does not require one to use a special structural rule for \emph{each} IPA, which contrasts with prior approaches~\cite{Bru09,LyoOrl23,Str13}. Moreover, as pointed out by Brünnler~\cite{Bru09}, having a distinct structural rule for each axiom extending the base logic can lead to a loss of modularity, requiring the addition of even more structural rules to secure completeness. This allows us to prove the hp-admissibility of a single, generic rule that simplifies our proof of cut-admissibility, making it uniform over the class of logics we consider. 

In the following subsection, we prove a sequence of hp-admissibility and hp-invertibility results, which will be sufficient to establish cut-admissibility in the \sect~\ref{subsec:cut-elim}.

\begin{figure}[t]
\noindent

\begin{center}
\begin{tabular}{c c c}
\AxiomC{$\ns \sbl \Gamma \sar \bot \sbr$}
\RightLabel{$\botr$}
\UnaryInfC{$\ns \sbl \Gamma \sar \emptyset \sbr$}
\DisplayProof

&

\AxiomC{$\ns \sbl \Gamma \sar \Delta \sbr$}
\RightLabel{$\wkl$}
\UnaryInfC{$\ns \sbl \Gamma, \Sigma \sar \Delta \sbr$}
\DisplayProof

&

\AxiomC{$\ns^{\downarrow} \sbl \Gamma \sar \emptyset \sbr$}
\RightLabel{$\wkr$}
\UnaryInfC{$\ns \sbl \Gamma\sar A \sbr$}
\DisplayProof
\end{tabular}
\end{center}


\begin{center}
\begin{tabular}{c c c}
\AxiomC{$\ns$}
\RightLabel{$\nec$}
\UnaryInfC{$\emptyset \sar \emptyset, (\charx) \bl \ns \br$}
\DisplayProof

&

\AxiomC{$\ns \sbl \Gamma, A, A \sar \Delta \sbr$}
\RightLabel{$\ctrl$}
\UnaryInfC{$\ns \sbl \Gamma, A \sar \Delta \sbr$}
\DisplayProof

&

\AxiomC{$\ns \sbl \Gamma \sar \Delta \sbr$}
\RightLabel{$\ew$}
\UnaryInfC{$\ns \sbl \Gamma \sar \Delta, (\charx)[\emptyset \sar \emptyset] \sbr$}
\DisplayProof
\end{tabular}
\end{center}


\begin{center}
\begin{tabular}{c c}
\AxiomC{$\ns \sbl \Gamma \sar \Delta, (\charx)[\nsii], (\charx)[\nsii] \sbr$}
\RightLabel{$\ec$}
\UnaryInfC{$\ns \sbl \Gamma \sar \Delta, (\charx)[\nsii] \sbr$}
\DisplayProof

&

\AxiomC{$\ns \sbl \Gamma \sar \Delta, (\charx)[\nsii] \sbr_{w} \sbl  \nsiii \sbr_{v}$}
\RightLabel{$\shift^{\dag(\axs)}$}
\UnaryInfC{$\ns \sbl \Gamma \sar \Delta \sbr_{w} \sbl  \nsiii \odot \nsii \sbr_{v}$}
\DisplayProof
\end{tabular}
\end{center}


\begin{flushleft}
\textbf{Side Conditions:}\\
$\dag(\axs):= w \prpath{L_{\charx}} u$ with $L_{\charx} = \glang(\charx)$.
\end{flushleft}

\caption{Height-preserving admissible rules in $\calc$.\label{fig:structural-rules}}
\end{figure}

\subsection{Admissibility and Invertibility}

The lemma below establishes that a generalized form of the $\id$ rule is admissible in $\calc$. This property ensures that provable nested sequents are closed under substitutions, which is essential for completeness (cf.~\cite[Lemma 12]{Lyo21thesis}). It is straightforward to prove the result by induction on the length of $A$.

\begin{lemma}\label{lem:gen-id}
For $A \in \lang$, $\ns\set{\Gamma, A \sar A}$ is provable in $\calc$.
\end{lemma}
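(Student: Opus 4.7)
The plan is to prove the lemma by straightforward induction on the length $\ell(A)$ of the formula $A$, strengthening the statement so that the context $\ns\set{\cdot}$ and the antecedent multiset $\Gamma$ are universally quantified. Both base cases are immediate: when $A = p \in \prop$, the sequent $\ns\set{\Gamma, p \sar p}$ is an instance of $\id$, and when $A = \bot$, the sequent $\ns\set{\Gamma, \bot \sar \bot}$ is an instance of $\botl$.

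For the inductive step, in each case I peel off the outermost connective by applying the corresponding right rule bottom-up, then the matching left rule, reducing every open branch to a sequent of the form $\ns'\set{\Gamma', A' \sar A'}$ with $\ell(A') < \ell(A)$, which is closed by IH. The $\lor$ case uses $\disl$ followed by $\disr$ with $i=1$ and $i=2$ on the two branches; the $\land$ case uses $\conr$ followed by $\conl$; the $\iimp$ case (writing $A = B \iimp C$) uses $\iimpr$ to reduce to $\ns\set{\Gamma, B \iimp C, B \sar C}$, then $\iimpl$ to produce the premises $\ns\set{\Gamma, B, B \iimp C \sar B}$ and $\ns\set{\Gamma, B, B \iimp C, C \sar C}$, noting that the $\ns^{\downarrow}$ in the left premise is vacuous here since the conclusion's unique output already sits in the focused component. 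For $A = \xdia B$, I apply $\xdial$ to obtain $\ns\set{\Gamma \sar \xdia B, (\charx)[B \sar \emptyset]}$ and then $\xdiar$ to propagate $B$ into the fresh child, yielding $\ns\set{\Gamma \sar \emptyset, (\charx)[B \sar B]}$; the case $A = \xbox B$ is dual, using $\xboxr$ and then $\xboxl$ to reach $\ns\set{\Gamma, \xbox B \sar (\charx)[B \sar B]}$. In each subcase IH closes the remaining branch.

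The only step demanding genuine attention—and the expected main obstacle—is the side condition $\dag(\axs) = w \prpath{L_{\charx}} u$ guarding the propagation rules $\xdiar$ and $\xboxl$ in the two modal subcases. In each case, the preceding rule step introduces a fresh $(\charx)$-child, so there is a direct propagation path $w \prpath{\charx} u$ between the root name $w$ and the newly introduced child $u$ in the associated propagation graph. Since the derivation relation $\dr$ of every $\axs$-grammar is reflexive by definition, $\charx \in \glang(\charx) = L_{\charx}$, and hence $w \prpath{L_{\charx}} u$ holds \emph{uniformly over every axiom set $\axs$}. The resulting premise thus matches the IH pattern, and the induction closes.
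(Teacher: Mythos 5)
Your proof is correct and takes exactly the route the paper intends: the paper omits the details and simply states that the result follows ``by induction on the length of $A$,'' which is precisely the induction you carry out, including the one genuinely delicate point that the side condition $\dag(\axs)$ on $\xdiar$ and $\xboxl$ is discharged uniformly because $\charx \in \glang(\charx)$ by reflexivity of $\dr$. No gaps.
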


The remaining hp-admissibility results are all proven by induction on the height of the given proof $\prf$.

\begin{lemma}
The $\botr$, $\nec$, $\wkl$, $\wkr$, and $\ew$ rules are hp-admissible in $\calc$.
\end{lemma}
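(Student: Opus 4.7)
The overall strategy is to prove each of the five rules hp-admissible by a straightforward induction on the height of the input proof $\prf$, with case analysis on the last rule applied. For each modification (adding $A$ to an antecedent, removing $\bot$ from a consequent, wrapping inside a fresh box, etc.) one must verify (i) that initial sequents remain initial under the modification, and (ii) that every rule of $\calc$ commutes with the modification, so one can apply the induction hypothesis to the premises and re-apply the same rule to obtain the modified conclusion.

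For the base case, when $\prf$ consists of a single instance of $\id$ or $\botl$: adding a formula to an antecedent ($\wkl$), adding an $A$ to a consequent of a $\botl$-closed sequent ($\wkr$), adding a fresh empty box ($\ew$), or wrapping in a fresh modality ($\nec$) each preserves the property of being an initial sequent; for $\botr$ we observe that $\bot$ cannot be the consequent $p$ of an $\id$-instance, so the premise must be a $\botl$-instance in which case its antecedent still contains $\bot$ and closes the conclusion.

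For the inductive step, the key point is that \emph{none} of these five modifications alter the tree $\tr(\ns)$ of a nested sequent in a way that invalidates the side condition $\dag(\axs)$ on the propagation rules $\xdiar$ and $\xboxl$. Adding antecedent formulae ($\wkl$) or a consequent formula ($\wkr$) leaves $\prgr{\ns}$ unchanged; removing $\bot$ from a consequent ($\botr$) likewise leaves the propagation graph untouched; and wrapping ($\nec$) or adding an empty child ($\ew$) can only enlarge $\prgr{\ns}$ by edges that are irrelevant to the propagation paths already used in $\prf$. Thus, in every inductive case, applying the induction hypothesis to each premise and re-applying the last rule produces the required proof, with height at most that of $\prf$.

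The main obstacle is $\wkr$: its premise is a right-empty nested sequent and its conclusion is right-filled, so one must check that every rule whose conclusion can be right-empty (namely $\id$, $\botl$, $\disl$, $\conl$, $\iimpl$, $\xboxl$, $\xdial$, and $\ddr$) commutes with the addition of an output formula $A$ at an arbitrary component. The only non-routine case is $\iimpl$: its left premise already carries the output $A$ at the marked component via $\ns^{\downarrow}$, while its right premise is right-empty, so one applies the induction hypothesis to the right premise alone (inserting the new output) and re-applies $\iimpl$, keeping the left premise untouched. A small additional subtlety for $\ddr$ is that the new output must be added to a component present in the conclusion, which is automatic since $\wkr$ selects a component of $\ns$; this component survives in the premise of $\ddr$, so the induction hypothesis applies. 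All other cases are routine commutations, completing the proof.
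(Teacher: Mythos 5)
Your proposal is correct and follows essentially the same route as the paper: induction on the height of the given proof, a case split on the last rule applied, with initial sequents preserved in the base case and IH-plus-reapplication in the inductive step (the paper's own proof is just a terse version of this, leaving the preservation of the $\dag(\axs)$ side conditions and the right-empty/right-filled bookkeeping for $\wkr$ implicit). The only nitpick is that $\id$ never has a right-empty conclusion, so it does not belong on your list of cases for $\wkr$ — but that case is vacuous and does not affect the argument.
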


\begin{proof} Let $\ru \in \set{\botr, \nec, \wkl, \wkr, \ew}$. We make a case distinction on the last rule $\ru' \in \calc$ applied in the given proof $\prf$. Observe that applying $\ru$ (if possible) to an $\id$ or $\botl$ instance gives another instance of that rule, which resolves the base case. For the inductive step, each case is resolved by applying IH (i.e., the induction hypothesis) to the premise(s) of $\ru'$ followed by $\ru'$.
\end{proof}

As depicted in \fig~\ref{fig:admiss-invert-dependencies}, the hp-admissibility of the two contraction rules $\ctrl$ and $\ec$ depends on the hp-invertibility of 
certain rules in $\calc$. In contrast to the nested systems for classical grammar logics~\cite{TiuIanGor12}, not all rules in $\calc$ are invertible; e.g., the $\iimpl$ rule is \emph{not} invertible in the left premise as witnessed by the following example:
\begin{center}
\AxiomC{$p \iimp q \sar p$}
\AxiomC{$q \sar p \iimp q$}
\RightLabel{$\iimpl$}
\BinaryInfC{$p \iimp q \sar p \iimp q$}
\DisplayProof
\end{center}
One can still establish that all remaining left rules in $\calc$ are hp-invertible, which is sufficient to obtain the hp-admissibility of the contraction rules.

\begin{lemma}\label{lem:hp-inver}
The $\disl$, $\conl$, $\dial$, $\boxl$, and $\ddr$ rules are hp-invertible in $\calc$.
\end{lemma}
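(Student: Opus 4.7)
The plan is to handle $\ddr$ separately and treat the other four rules by a uniform induction on proof height. For $\ddr$, observe that inverting it amounts to transforming a proof of $\ns\{\Gamma \sar \Delta\}$ into a proof of $\ns\{\Gamma \sar \Delta, (\charx)[\emptyset \sar \emptyset]\}$, which is exactly a special case of the external weakening rule $\ew$. Since $\ew$ has already been established as hp-admissible (see \fig~\ref{fig:admiss-invert-dependencies}), hp-invertibility of $\ddr$ is immediate.

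For each $\ru \in \{\disl, \conl, \dial, \boxl\}$, I proceed by induction on the height of a proof $\prf$ of the conclusion. In the base case, $\prf$ is a single instance of $\id$ or $\botl$, and the target premise(s) of the inverted rule are again instances of the same initial rule: removing a compound formula $A \lor B$, $A \land B$, or $\xdia A$ from the antecedent, adding $A$ at a distant component (the $\boxl$ case), or grafting a fresh child $(\charx)[A \sar \emptyset]$ (the $\dial$ case) cannot disturb an existing identity pair or a principal $\bot$.

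For the inductive step, consider the last rule $\ru'$ in $\prf$. In the non-principal case---where $\ru'$ differs from $\ru$ or acts on a different formula occurrence---apply IH to each premise of $\ru'$ and then reapply $\ru'$, which does not increase the height. In the principal case, where $\ru'=\ru$ and acts on the same formula occurrence, the premise(s) of $\ru'$ already deliver exactly the target of the invertibility transformation for $\disl$, $\conl$, and $\dial$. A small subtlety arises for $\boxl$: if the $\boxl$ instance in $\prf$ introduces its auxiliary $A$ at a component \emph{other} than the one targeted by the invertibility transformation, we treat it as a non-principal case and invoke IH; if it targets the same component, the premise already contains $A$ where required.

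The main obstacle is verifying that the side condition $\dag(\axs)$ of the propagation rules $\xdiar$ and $\xboxl$ continues to hold when we reapply $\ru'$ after IH. For inversions of $\disl$, $\conl$, and $\boxl$, the underlying tree $\tr(\ns)$ is unchanged, so every propagation path required by $\ru'$ persists verbatim. For $\dial$, the transformation adjoins a new child $(\charx)[A \sar \emptyset]$ to some component; this only enlarges the propagation graph, so by monotonicity any path $v \prpath{L_{\chary}} v'$ used by $\ru'$ in $\prf$ continues to exist. Since the grafted child is right-empty, the right-empty/right-filled status of the surrounding sequent is preserved as well, so no side condition on $\iimpr$ or $\xboxr$ can be violated either. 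These monotonicity observations close the non-principal cases and complete the induction.
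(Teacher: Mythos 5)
Your proposal is correct and follows essentially the same route as the paper: $\ddr$'s inverse is dispatched as an instance of the hp-admissible $\ew$ rule, and the remaining rules are handled by induction on proof height with the usual permutation argument (including the check that propagation-graph side conditions survive, which for $\dial$ reduces to monotonicity under adding a child component). The only divergence is that the paper also treats $\boxl$ by the shortcut rather than by induction--its inverse merely adds $A$ to the antecedent of the $u$-component and is therefore literally an instance of the hp-admissible $\wkl$ rule--so your inductive case analysis for $\boxl$, while sound, is more work than necessary.
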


\begin{proof} The hp-invertibility of $\xboxl$ and $\ddr$ follow from the hp-admissibility of $\wkl$ and $\ew$. The remaining cases are argued as usual by induction on the height of the given proof.
\end{proof}

\begin{lemma}
The $\ctrl$ and $\ec$ rules are hp-admissible in $\calc$.
\end{lemma}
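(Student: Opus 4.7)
The plan is to prove both hp-admissibilities simultaneously by induction on the height $h$ of the given proof $\prf$. For the base case, if $\prf$ ends with an instance of $\id$ or $\botl$, then applying $\ctrl$ or $\ec$ to it yields another instance of the same initial rule at height zero. For the inductive step, we case split on the last rule $\ru$ of $\prf$.

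For $\ctrl$ contracting two copies of a formula $A$, when $A$ is not principal in $\ru$ we will apply IH to each premise of $\ru$ and then reapply $\ru$; the side conditions of $\xdiar$ and $\xboxl$ are preserved since $\ctrl$ only edits one component's antecedent multiset and leaves the propagation graph intact. When $A$ is principal in $\ru$ and $\ru$ is one of the hp-invertible left rules of Lemma \ref{lem:hp-inver} (namely $\disl$, $\conl$, $\dial$, $\boxl$, or $\ddr$), we will first strip off the outer connective from the surviving copy of $A$ in the premise(s) via the relevant hp-inverse, then apply IH to contract the newly-duplicated subformulae, and finally reapply $\ru$. For the non-invertible $\iimpl$ with $A = C \iimp D$ principal, we will use the fact that $\iimpl$ preserves $C \iimp D$ in both premises, so IH applies directly to each premise (which still carries both copies of $A$), after which we reapply $\iimpl$.

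For $\ec$ contracting two occurrences of $(\chary)[\nsii]$ as children of some component, the analogous case analysis handles any $\ru$ acting uniformly on or entirely outside both copies. The \textbf{main obstacle} is the \emph{asymmetric} propagation case, in which $\xboxl$ or $\xdiar$ targets a component lying strictly inside one of the two copies, so that the premise has two copies differing in content. The $\xboxl$ sub-case will be resolved by invoking hp-admissibility of $\wkl$ to mirror the newly-injected antecedent formula into the corresponding component of the unmodified copy, restoring the symmetry required by IH; the propagation path along which $\xboxl$ was applied has a structurally identical mirror between the two copies, so its side condition remains satisfied after contraction. The $\xdiar$ sub-case is more delicate because the single-conclusion restriction forbids mirroring an output formula via $\wkr$ without violating the one-output constraint; the workaround will be to observe that the propagation witness $w \prpath{L_\charx} u$ through one copy has an isomorphic analogue terminating at the corresponding component of the other copy, so that after applying IH to merge the asymmetric premise we can reapply $\xdiar$ along this analogue to obtain the desired conclusion.
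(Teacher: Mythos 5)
Your proposal follows the paper's strategy --- a simultaneous induction on the height of the given proof for $\ctrl$ and $\ec$, with the principal-formula cases of $\ctrl$ discharged via the hp-invertibility of the left rules (\lem~\ref{lem:hp-inver}) and the $\xdia$-case reduced to $\ec$ --- and in fact you supply more case detail than the paper, which only works out the $\xdial$ case and declares the rest ``simple or similar.'' Your handling of $\iimpl$ (whose principal formula persists into both premises, so the IH applies directly without any appeal to invertibility) and of the asymmetric $\xboxl$ case for $\ec$ (symmetrise with the hp-admissible $\wkl$, contract by IH, reapply $\xboxl$ along the isomorphic propagation path in the contracted sequent) are both correct.

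The one genuine soft spot is the asymmetric $\xdiar$ case for $\ec$. There the premise has the form $\ns\sbl\Gamma\sar\emptyset\sbr_{w}\sbl\ldots,(\chary)[\nsii'],(\chary)[\nsii]\sbr$, where $\nsii'$ is $\nsii$ with the output formula $A$ inserted into the consequent of an inner component: the two bracketed subtrees are \emph{not} identical, so ``applying IH to merge the asymmetric premise'' is not an instance of $\ec$ as stated, which contracts two syntactically equal copies. You correctly observe that $\wkr$ cannot restore the symmetry (it would create a second output formula, violating the single-conclusion constraint), but the merging step you then invoke is exactly the thing the IH does not yet provide. The standard repair is to strengthen the induction hypothesis: prove hp-admissible a merge contraction collapsing $(\charx)[\nsii_{1}],(\charx)[\nsii_{2}]$ into the componentwise union of $\nsii_{1}$ and $\nsii_{2}$ whenever the two subtrees have the same underlying shape (at most one of them can be right-filled, so the merge is again a legal nested sequent); $\ec$ is the special case $\nsii_{1}=\nsii_{2}$, and the $\xdiar$ case then closes exactly as you describe, with the side condition recovered from the isomorphic propagation path into the surviving copy. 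Without some such strengthening the induction does not go through at this case.
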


\begin{proof} We prove the result by a simultaneous induction on the height of the given proof. The base cases are simple since any application of $\ctrl$ or $\ec$ to an initial rule gives another instance of that rule; hence, we focus on the inductive step. For the inductive step, we show the $\xdial$ case, noting that the other cases are simple or similar.

We consider the case where $\ctrl$ is preceded by an application of $\xdial$ and where the principal formula of $\xdial$ is auxiliary in $\ctrl$. This case is shown below left and is resolved as shown below right. We first use the hp-invertibility of $\xdial$ (\lem~\ref{lem:hp-inver}), then apply IH relative to $\ec$, and last apply the $\xdial$ rule.
\begin{center}
\begin{tabular}{c c} 
\AxiomC{$\dotprf{\prf}$}
\UnaryInfC{$\ns \sbl \Gamma, \xdia A \sar \Delta, (\charx)[ A \sar \emptyset ] \sbr$}
\RightLabel{$\xdial$}
\UnaryInfC{$\ns \sbl \Gamma, \xdia A, \xdia A \sar \Delta \sbr$}
\RightLabel{$\ctrl$}
\UnaryInfC{$\ns \sbl \Gamma, \xdia A \sar \Delta \sbr$}
\DisplayProof

&

\AxiomC{$\dotprf{\prf'}$}
\UnaryInfC{$\ns \sbl \Gamma, \xdia A \sar \Delta, (\charx)[ A \sar \emptyset ] \sbr$}
\RightLabel{$\xdial_{1}^{-1}$}
\UnaryInfC{$\ns \sbl \Gamma \sar \Delta, (\charx)[ A \sar \emptyset ], (\charx)[ A \sar \emptyset ] \sbr$}
\RightLabel{$\ec$}
\UnaryInfC{$\ns \sbl \Gamma \sar \Delta, (\charx)[ A \sar \emptyset ] \sbr$}
\RightLabel{$\xdial$}
\UnaryInfC{$\ns \sbl \Gamma, \xdia A \sar \Delta \sbr$}
\DisplayProof
\end{tabular}
\end{center}
This case shows that the hp-admissibility of $\ctrl$ depends on that of $\ec$. 
\end{proof}

\begin{lemma}
The $\shift$ rule is hp-admissible in $\calc$.
\end{lemma}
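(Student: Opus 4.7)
The plan is to proceed by induction on the height of the given proof $\prf$ of $\shift$'s premise. For the base case, when $\prf$ is an instance of $\id$ or $\botl$, the key observation is that the $\odot$ operation only augments (never shrinks) the antecedent, consequent, and children of the $v$-component, while deleting the branch $(\charx)[\nsii]$ leaves every Gentzen-level formula of the $w$-component untouched; hence the conclusion of $\shift$ is again an instance of the same initial rule.

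For the inductive step, let $\ru$ denote the last rule of $\prf$. For the propositional rules together with $\iimpr$, $\xdial$, $\xboxr$, and $\ddr$, the plan is to invoke IH on the premise(s) of $\ru$, thereby permuting $\shift$ upward in the derivation, and then to reapply $\ru$ at the analogous location in the shifted sequent. These cases are routine; the only bookkeeping arises when the branch $(\charx)[\nsii]$ being moved is itself created or operated inside by $\ru$, in which case the shift will be applied at the earlier stage, before the child is introduced or modified, and then $\ru$ reapplied below.

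The principal obstacle is the propagation-rule case, where $\ru \in \set{\xdiar, \xboxl}$ carries a side condition of the form $w' \prpath{L_\chary} u'$. After applying IH, the challenge will be to verify that the corresponding side condition still holds in the shifted sequent so that $\ru$ can be reapplied. The shift modifies the propagation graph by deleting the converse edges between $w$ and the root $r$ of $\nsii$, and by rewiring every edge incident to $r$ so as to become incident to $v$. Paths avoiding $r$ are preserved verbatim, while paths traversing $r$ must be rerouted through $v$. The rerouting is driven by the shift's own side condition $w \prpath{L_\charx} v$ together with the fact that, whenever $\chary \dr \ppath_{1} \concat \charx \concat \ppath_{2}$ and $\charx \dr \stra$, we also have $\chary \dr \ppath_{1} \concat \stra \concat \ppath_{2}$, so the rerouted string still lies in $L_\chary$. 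Reverse traversals through the edge $(r, \conv{\charx}, w)$ are handled symmetrically via Lemma~\ref{lem:string-and-inverse}. Once this rerouting is established, $\ru$ can be reapplied and the induction closes.
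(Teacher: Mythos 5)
Your proposal is correct and follows essentially the same route as the paper: induction on the height of the proof, a trivial base case, routine permutations for the non-propagation rules, and the key propagation-rule case resolved by rerouting the propagation path through the shift's destination component using the closure of the derivation relation $\dr$ under substitution of $\stra$ for $\charx$ inside a context. Your explicit treatment of converse-edge traversals via Lemma~\ref{lem:string-and-inverse} is a point the paper leaves implicit, but it is the same argument.
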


\begin{proof} If $\shift$ is applied to an initial rule, then the result is another instance of an initial rule, showing the base case of induction. For the inductive step, we consider the cases where $\shift$ is applied to the conclusion of an $\xdiar$ rule and $\xboxr$ rule. The remaining cases are analogous or straightforward.

$\xdiar$. Let us consider an application of $\xdiar$ followed by an application of $\shift$, as shown below. 
\begin{center}
\AxiomC{$\dotprf{\prf}$}
\UnaryInfC{$\ns_{0} \sbl \Gamma \sar \emptyset \sbr_{w} \sbl  \Sigma \sar A \sbr_{u} \sbl \Phi \sar \emptyset, (\chary)[\nsii] \sbr_{v} \sbl \nsiii \sbr_{z}$}
\RightLabel{$\xdiar$}
\UnaryInfC{$\ns_{0} \sbl \Gamma \sar \xdia A \sbr_{w} \sbl  \Sigma \sar \emptyset \sbr_{u} \sbl \Phi \sar \emptyset, (\chary)[\nsii] \sbr_{v} \sbl \nsiii \sbr_{z}$}
\RightLabel{$\shift$}
\UnaryInfC{$\ns_{1} \sbl \Gamma \sar \xdia A \sbr_{w} \sbl  \Sigma \sar \emptyset \sbr_{u} \sbl \Phi \sar \emptyset \sbr_{v} \sbl \nsiii \odot \nsii \sbr_{z}$}
\DisplayProof
\end{center}
Let $L_{\charx} = \glang(\charx)$ and $L_{\chary} = \glang(\chary)$. The side condition on the $\xdiar$ rule ensures that $w \prpath{L_{\charx}} u$ in $\ns_0$ and the side condition on rule $\shift$ ensures that $v \prpath{L_{\chary}} z$ in $\ns_0$. We have to show that both side conditions continue to hold after we permute $\shift$ above $\xdiar$, thus giving the proof shown below.
\begin{center}
\AxiomC{$\dotprf{\prf}$}
\UnaryInfC{$\ns_{0} \sbl \Gamma \sar \emptyset \sbr_{w} \sbl  \Sigma \sar A \sbr_{u} \sbl \Phi \sar \emptyset, (\chary)[\nsii] \sbr_{v} \sbl \nsiii \sbr_{z}$}
\RightLabel{$\shift$}
\UnaryInfC{$\ns_{0} \sbl \Gamma \sar \emptyset \sbr_{w} \sbl  \Sigma \sar A \sbr_{u} \sbl \Phi \sar \emptyset \sbr_{v} \sbl \nsiii \odot \nsii \sbr_{z}$}
\RightLabel{$\xdiar$}
\UnaryInfC{$\ns_{1} \sbl \Gamma \sar \xdia A \sbr_{w} \sbl  \Sigma \sar \emptyset \sbr_{u} \sbl \Phi \sar \emptyset \sbr_{v} \sbl \nsiii \odot \nsii \sbr_{z}$}
\DisplayProof
\end{center}
If the $w$-component and the $u$-component both occur in $\nsii$ or neither occurs in $\nsii$, then $\shift$ can be permuted above $\xdiar$ because the two rules do not interact. Let us consider the case where the $u$-component occurs in $\nsii$ and the $w$-component does not. (NB. The case where the $w$-component occurs in $\nsii$ and the $u$-component does not is proven similarly.) Let $v'$ be the name of the component that serves as the root of $\nsii$. It follows that the propagation path from $w$ to $u$ in $\ns_{0}$ is of the form $w \prpath{\stra} v \prpath{\chary} v' \prpath{\strb} u$ with $\stra \chary \strb \in L_{\charx}$. By the side condition of the $\shift$ rule, we know there exists a string $\strc \in L_{\chary}$ such that $v \prpath{\strc} z$. Since $L_{\chary}  = \glang(\chary)$, this implies that $\chary \longrightarrow^{*}_{\g{\axs}} \strc$, which further implies that $\stra \strc \strb \in L_{\charx}$. This string corresponds to the propagation path $w \prpath{\stra} v \prpath{\strc} z \prpath{\strb} u$ in $\ns_{1}$, showing that after $\shift$ has been applied the side condition on $\xdiar$ still holds, and thus, the two rules may indeed be permuted as shown above.

$\xboxr$. Suppose we have an instance of $\xboxr$ followed by an instance of $\shift$, as shown below. 
\begin{center}
\AxiomC{$\dotprf{\prf}$}
\UnaryInfC{$\ns \sbl \Gamma \sar (\charx)[\emptyset \sar A] \sbr_{w} \sbl  \Sigma \sar \emptyset, (\chary)[\nsii] \sbr_{u} \sbl \nsiii \sbr_{v}$}
\RightLabel{$\xboxr$}
\UnaryInfC{$\ns \sbl \Gamma \sar \xbox A \sbr_{w}  \sbl  \Sigma \sar \emptyset, (\chary)[\nsii] \sbr_{u} \sbl \nsiii \sbr_{v}$}
\RightLabel{$\shift$}
\UnaryInfC{$\ns \sbl \Gamma \sar \xbox A \sbr_{w} \sbl  \Sigma \sar \emptyset \sbr_{u} \sbl \nsiii \odot \nsii \sbr_{v}$}
\DisplayProof
\end{center}
Observe that we can simply permute the $\shift$ application above the $\boxr$ application to obtain the desired result.
\end{proof}

\subsection{Syntactic Cut-Admissibility}\label{subsec:cut-elim}

As discussed in prior sections, Tiu et al.~\cite{TiuIanGor12} provided (both shallow and deep) nested sequent systems for classical grammar logics. A cut admissibility algorithm was given for the shallow systems by utilizing a strategy akin to Belnap's general cut admissibility theorem for display calculi~\cite[Theorem 4.4]{Bel82}. Cut admissibility was then shown for the deep nested systems in a roundabout fashion by proving a syntactic correspondence between the shallow and deep nested systems. It is conceivable that we could achieve cut admissibility in the intuitionistic setting by employing a similar (indirect) strategy; however, we opt for a syntactic cut admissibility algorithm that eliminates cuts \emph{directly within} our nested systems. We remark that this approach can be adapted to the classical setting--giving a cut admissibility procedure that operates directly within the deep nested sequent proofs of Tiu et al.~\cite{TiuIanGor12}.

In the deep-inference setting (and in contrast to the shallow setting), the proof of cut admissibility is more complex because we must keep track of propagation paths and ensure the satisfaction of side conditions on propagation rules after upward permutations of cuts (see \thm~\ref{thm:cut-elim} below). Still, beyond being direct, our cut admissibility proof also has the advantage that it is \emph{uniform} over the class of logics we consider. 
This is primarily due to our use of the hp-admissible shift rule $\shift$, which unifies reasoning with IPAs in a single rule. This also allows for us to circumvent the use of \emph{ad hoc} structural rules to deal with special cut admissibility cases, as is the case in other settings~\cite{Bru09,LyoOrl23,Str13}.

\begin{theorem}[Cut-Admissibility]\label{thm:cut-elim}
The $\cut$ rule is admissible in $\calc$.
\begin{center}
\AxiomC{$\ns^{\da} \sbl \Gamma \sar A \sbr$}
\AxiomC{$\ns \sbl \Gamma, A \sar \Delta \sbr$}
\RightLabel{$\cut$}
\BinaryInfC{$\ns \sbl \Gamma \sar \Delta \sbr$}
\DisplayProof
\end{center}
\end{theorem}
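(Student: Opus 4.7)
The plan is to proceed by a standard double induction on the pair $(\ell(A), h_1 + h_2)$ ordered lexicographically, where $\ell(A)$ is the length of the cut formula $A$ and $h_1, h_2$ are the heights of the proofs of the left and right cut premises. I carry out a case analysis on the last rules $\ru_1, \ru_2$ applied in the two proofs. Base cases, where at least one of $\ru_1, \ru_2$ is $\id$ or $\botl$, are resolved directly: either the cut collapses to the surviving premise after an hp-admissible weakening, or the conclusion is itself an instance of an initial rule.

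In the non-principal cases, where $A$ is not principal in at least one premise, I permute the cut upward past the non-principal rule and invoke the secondary induction hypothesis on the resulting subcuts (whose total height is strictly smaller). The key subtlety arises when the rule being permuted past is a propagation rule $\xdiar$ or $\xboxl$: one must verify that the side condition $w \prpath{L_\charx} u$ still holds after permutation. This is immediate because cut does not alter the underlying tree structure of the nested sequent and hence preserves the propagation graph. The non-invertibility of $\iimpl$ on its left premise necessitates cutting into both premises of $\iimpl$ separately, with attention paid to the $\ns^\da$ marker when descending into the right-filled left premise.

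The principal cases, where $A$ is principal in both $\ru_1$ and $\ru_2$, are the substantive content of the proof. Propositional principal cases ($\lor, \land, \iimp$) reduce to cuts on immediate subformulas in the standard way, invoking the hp-invertibility results of Lemma~\ref{lem:hp-inver} to line up the premises. For the modal principal cases the hp-admissible $\shift$ rule is essential. Consider, for instance, the $\xbox A$-principal case: the left cut premise arises by $\xboxr$ from a proof $\prf_1'$ of $\ns^\da\set{\Gamma \sar (\charx)[\emptyset \sar A]}_w$, and the right cut premise arises by $\xboxl$, with side condition $w \prpath{L_\charx} u$, from a proof $\prf_2'$ of $\ns\set{\Gamma, \xbox A \sar \Delta}_w\set{\Sigma, A \sar \Pi}_u$. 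I first apply the secondary induction hypothesis to cut $\prf_1$ against $\prf_2'$ on the same formula $\xbox A$, producing a proof of $\ns\set{\Gamma \sar \Delta}_w\set{\Sigma, A \sar \Pi}_u$. Using the same $L_\charx$-path, I then apply $\shift$ to $\prf_1'$ to transport the child $(\charx)[\emptyset \sar A]$ of $w$ into the $u$-component, producing a proof with $A$ as output at $u$. A final cut on the strictly smaller formula $A$, justified by the primary induction hypothesis, eliminates the remaining $A$ and yields the required conclusion. The $\xdia A$-principal case is dual: $\shift$ is applied instead to the $\xdial$-premise $\prf_2'$ to transport $(\charx)[A \sar \emptyset]$ into the $u$-component, placing $A$ as input at $u$ for the subsequent smaller cut, with Lemma~\ref{lem:reversing-paths} invoked to handle the inverse path when needed.

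The main obstacle I anticipate is the careful contextual and single-conclusion bookkeeping required in the principal modal reductions. The $\shift$ rule is applicable only when $\nsiii \odot \nsii$ is itself a well-formed nested sequent, which in the single-conclusioned setting requires the target component $\nsiii$ to be appropriately right-empty so that $\nsii$'s unique output can land there without violating the at-most-one-output constraint. One must likewise verify that the contexts of the two proofs combined by the final smaller cut actually match, possibly up to an application of hp-admissible weakening. The $\ns^\da$ operation is crucial here, since it strips exactly the pre-existing output and thereby aligns the single-conclusion structure across the reductions. The payoff of this uniform treatment via $\shift$ is that the argument covers all IGLs simultaneously, avoiding the \emph{ad hoc} structural rules traditionally needed to handle cuts arising from individual frame axioms in the modular nested-sequent literature.
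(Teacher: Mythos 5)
Your proposal is correct and follows essentially the same route as the paper's proof: the same lexicographic induction on $(\len{A}, h_{1}+h_{2})$, the same three-way case split (initial rules, non-principal, principal), hp-invertibility to permute cuts past non-principal left rules with special treatment of the non-invertible $\iimpl$, and--decisively--the hp-admissible $\shift$ rule to transport the freshly created modal child into the $u$-component along the $L_{\charx}$-path in the principal $\xbox$/$\xdia$ cases before cutting on the strictly smaller formula $A$. The bookkeeping concerns you raise (well-formedness of $\nsiii \odot \nsii$ under the single-conclusion constraint, matching contexts via hp-admissible weakening, the role of $\ns^{\da}$) are exactly the points the paper handles with $\wkl$ and the $\da$ operation, so no gap remains.
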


\begin{proof} We prove the result by simultaneous induction on the lexicographic ordering of pairs of the form $(\len{A},h_{1}+h_{2})$, where $\len{A}$ is the length of the cut formula $A$, and $h_{1}$ and $h_2$ are the heights of the proofs of the left and right premises of $\cut$, respectively. We assume w.l.o.g. that $\cut$ occurs only once in the given proof as last inference; the general result follows by successively applying the cut admissibility procedure to topmost instances of $\cut$ in a given proof. Note that $\cut$ can always be permuted above applications of $\ddr$ by utilizing the hp-admissibility of $\ew$, and thus, we may omit consideration of $\ddr$ below as all such cases hold.

We organize the proof into three exhaustive cases: (1) at least one premise of $\cut$ is an instance of $\id$ or $\botl$, (2) the cut formula is not principal in at least one premise, and (3) the cut formula is principal in both premises.

(1) 
Let us consider the case where the left premise of $\cut$ is an instance of $\botl$ or $\id$. 
If the left premise of $\cut$ is an instance of $\botl$, then the conclusion of $\cut$ will be an instance of $\botl$, showing that $\cut$ can be replaced by an application of the $\botl$ rule. 

Let us now suppose that the left premise of $\cut$ is an instance of $\id$. Then, 
the principal (output) formula $p$ of $\id$ is the cut formula. The $\cut$ is of the form shown below left and can be removed by applying the hp-admissible $\ctrl$ rule shown below right.
\begin{center}
\begin{tabular}{c c}
\AxiomC{ } 
\RightLabel{$\id$}
\UnaryInfC{$\ns^{\da}\sbl \Gamma, p \sar p \sbr$}

\AxiomC{$\dotprf{\prf}$}
 \UnaryInfC{$\ns\sbl \Gamma, p, p \sar \Delta \sbr$}
 \RightLabel{$\cut$}
\BinaryInfC{$\ns\sbl \Gamma, p \sar \Delta \sbr$}
\DisplayProof

&

\AxiomC{$\dotprf{\prf}$}
 \UnaryInfC{$\ns\sbl \Gamma, p, p \sar \Delta \sbr$}
 \RightLabel{$\ctrl$}
\UnaryInfC{$\ns\sbl \Gamma, p \sar \Delta \sbr$}
\DisplayProof
\end{tabular}
\end{center}

Next, we suppose that the right premise of $\cut$ is an instance of $\botl$ or $\id$. In the former case, if the principal formula $\bot$ is not the cut formula, then the conclusion will be an instance of $\botl$ and may therefore be proven with an instance of $\botl$. If the principal formula $\bot$ is the cut formula, as shown below left, then the case may be resolved with the hp-admissible $\botr$ and $\wkr$ rules, as shown below right.
\begin{center}
\begin{tabular}{c c}
\AxiomC{$\dotprf{\prf}$}
 \UnaryInfC{$\ns^{\da}\sbl \Gamma \sar \bot \sbr$}

\AxiomC{ } 
\RightLabel{$\botl$}
\UnaryInfC{$\ns\sbl \Gamma, \bot \sar \Delta \sbr$}

 \RightLabel{$\cut$}
\BinaryInfC{$\ns\sbl \Gamma \sar \Delta \sbr$}
\DisplayProof

&

\AxiomC{$\dotprf{\prf}$}
 \UnaryInfC{$\ns\sbl \Gamma \sar \bot \sbr$}
 \RightLabel{$\botr$}
  \UnaryInfC{$\ns\sbl \Gamma \sar \emptyset \sbr$}
 \RightLabel{$\wkr$}
\UnaryInfC{$\ns\sbl \Gamma \sar \Delta \sbr$}
\DisplayProof
\end{tabular}
\end{center}
If the right premise of $\cut$ is an instance of $\id$, then there are two cases to consider: first, if the principal formula $p$ in the right premise is not the cut formula, then the conclusion of $\cut$ will be an instance of $\id$, so the cut can be replaced by an $\id$ application. Second, if the principal formula $p$ in the right premise is the cut formula, then the conclusion of $\cut$ will be identical to its left premise, so the $\cut$ can be replaced by the proof of its left premise.

(2) Let us suppose that the cut formula is not principal in the left premise of $\cut$; the case where the cut formula is not principal in the right premise of $\cut$ is argued similarly. First, let us consider the case where $\ru$ is a left rule other than $\iimpl$ that proves the left premise of $\cut$, as shown below left. We also suppose that $\ru$ is a unary rule as the binary case is similar. To resolve the case, we apply the hp-invertibility of $\ru$ to the conclusion of $\prf_{2}$, then cut with the premise of $\ru$, and then apply $\ru$ after $\cut$, as shown below right.
\begin{center}
\begin{tabular}{c c}
\AxiomC{$\dotprf{\prf_{1}}$}
\UnaryInfC{$\nsii^{\da}\sbl \Gamma' \sar A \sbr$}
\RightLabel{$\ru$}
\UnaryInfC{$\ns^{\da}\sbl \Gamma \sar A \sbr$}

\AxiomC{$\dotprf{\prf_{2}}$}
\UnaryInfC{$\ns\sbl \Gamma, A \sar \Delta \sbr$}
\RightLabel{$\cut$}
\BinaryInfC{$\ns\sbl \Gamma \sar \Delta \sbr$}
\DisplayProof

&

\AxiomC{$\dotprf{\prf_{1}}$}
\UnaryInfC{$\nsii^{\da}\sbl \Gamma' \sar A \sbr$}

\AxiomC{$\dotprf{\prf_{2}}$}
\UnaryInfC{$\ns\sbl \Gamma, A \sar \Delta \sbr$}
\RightLabel{$\ru_{1}^{-1}$} 
\UnaryInfC{$\nsii\sbl \Gamma', A \sar \Delta \sbr$}
\RightLabel{$\cut$}
\BinaryInfC{$\nsii\sbl \Gamma' \sar \Delta \sbr$}
\RightLabel{$\ru$}
\UnaryInfC{$\ns\sbl \Gamma \sar \Delta \sbr$}
\DisplayProof
\end{tabular}
\end{center}

Let us suppose now that the left premise of $\cut$ is proven using $\iimpl$ and suppose w.l.o.g. that the rule affects the same component as the cut formula. As the cut formula $A$ is assumed not principal in the left premise, our $\cut$ is of the form shown below top. It can be resolved as shown below bottom by shifting $\cut$ up the right premise of $\iimpl$.
\begin{center}
\AxiomC{$\dotprf{\prf_{1}}$}
\UnaryInfC{$\ns^{\da}\sbl \Gamma, B  \iimp C \sar B \sbr$}
\AxiomC{$\dotprf{\prf_{2}}$}
\UnaryInfC{$\ns\sbl \Gamma, C \sar A \sbr$}
\RightLabel{$\iimpl$}
\BinaryInfC{$\ns^{\da}\sbl \Gamma, B \imp C \sar A \sbr$}

\AxiomC{$\dotprf{\prf_{3}}$}
\UnaryInfC{$\ns\sbl \Gamma, A \sar \Delta \sbr$}
\RightLabel{$\cut$}
\BinaryInfC{$\ns\sbl \Gamma, B \imp C \sar \Delta \sbr$}
\DisplayProof
\end{center}
\begin{center}
\AxiomC{$\dotprf{\prf_{1}}$}
\UnaryInfC{$\ns^{\da}\sbl \Gamma, B  \iimp C \sar B \sbr$}

\AxiomC{$\dotprf{\prf_{2}}$}
\UnaryInfC{$\ns^{\da}\sbl \Gamma, C \sar A \sbr$}
\AxiomC{$\dotprf{\prf_{3}}$}
\UnaryInfC{$\ns\sbl \Gamma, A \sar \Delta \sbr$}
\RightLabel{$\wkl$}
\UnaryInfC{$\ns\sbl \Gamma, C, A \sar \Delta \sbr$}
\RightLabel{$\cut$}

\BinaryInfC{$\ns\sbl \Gamma, C \sar \Delta \sbr$}
\RightLabel{$\iimpl$}

\BinaryInfC{$\ns\sbl \Gamma, B \imp C \sar \Delta \sbr$}
\DisplayProof
\end{center}

(3) We consider two non-trivial cases and note that the remaining cases are simple or similar. First, we consider the case where the cut formula is principal in an application of $\iimpr$ in the left premise and an application of $\iimpl$ in the right premise. 
\begin{center}
\AxiomC{$\dotprf{\prf_{1}}$}
\UnaryInfC{$\ns\sbl \Gamma, A \sar B \sbr$}
\RightLabel{$\iimpr$}
\UnaryInfC{$\ns\sbl \Gamma \sar A \imp B \sbr$}

\AxiomC{$\dotprf{\prf_{2}}$}
\UnaryInfC{$\ns^{\da} \sbl \Gamma, A \iimp B \sar A \sbr$}

\AxiomC{$\dotprf{\prf_{3}}$}
\UnaryInfC{$\ns \sbl \Gamma, B \sar \Delta \sbr$}
\RightLabel{$\iimpl$}
\BinaryInfC{$\ns \sbl \Gamma, A \iimp B \sar \Delta \sbr$}
\RightLabel{$\cut$}
\BinaryInfC{$\ns \sbl \Gamma \sar \Delta \sbr$}
\DisplayProof
\end{center}
To remove the $\cut$, we first reduce the height of the $\cut$ by applying a $\cut$ between the conclusion of $\iimpr$ and the conclusion of $\prf_{2}$. This gives a cut-free proof of $\ns \sbl \Gamma \sar A \sbr$ by IH since $h_{1} + h_{2}$ has decreased. We then apply a $\cut$ between the conclusion of $\prf_{1}$ and the former nested sequent to obtain a cut-free proof of $\ns\sbl \Gamma \sar B \sbr$, which can then be cut with $\ns \sbl \Gamma, B \sar \Delta \sbr$. These last two cuts can be eliminated since the cut formulae are of a strictly smaller length.


Last, we show how to resolve the case where the cut formula is principal in an application of $\xboxr$ in the left premise and an application of $\xboxl$ in the right premise. 
\begin{center}
\AxiomC{$\dotprf{\prf_{1}}$}
\UnaryInfC{$\ns \sbl \Gamma \sar (\charx) [ \emptyset \sar A ] \sbr_{w} \sbl \Sigma \sar \emptyset \sbr_{u}$}
\RightLabel{$\xboxr$}
\UnaryInfC{$\ns \sbl \Gamma \sar \xbox A \sbr_{w} \sbl \Sigma \sar \emptyset \sbr_{u}$}

\AxiomC{$\dotprf{\prf_{2}}$}
\UnaryInfC{$\ns \sbl \Gamma, \xbox A \sar \Delta \sbr_{w} \sbl \Sigma, A \sar \Pi \sbr_{u}$}
\RightLabel{$\xboxl$}
\UnaryInfC{$\ns \sbl \Gamma, \xbox A \sar \Delta \sbr_{w} \sbl \Sigma \sar \Pi \sbr_{u}$}

\RightLabel{$\cut$}
\BinaryInfC{$\ns \sbl \Gamma \sar \Delta \sbr_{w} \sbl \Sigma \sar \Pi \sbr_{u}$}
\DisplayProof
\end{center}
We first reduce the height of the $\cut$ as shown below, relying on the hp-admissible $\wkl$ rule. We let $\prf_{3}$ denote the resulting proof.
\begin{center}
\AxiomC{$\dotprf{\prf_{1}}$}
\UnaryInfC{$\ns \sbl \Gamma \sar (\charx) [ \emptyset \sar A ] \sbr_{w} \sbl \Sigma \sar \emptyset \sbr_{u}$}
\RightLabel{$\xboxr$}
\UnaryInfC{$\ns \sbl \Gamma \sar \xbox A \sbr_{w} \sbl \Sigma \sar \emptyset \sbr_{u}$}
\RightLabel{$\wkl$}
\UnaryInfC{$\ns \sbl \Gamma \sar \xbox A \sbr_{w} \sbl \Sigma, A \sar \emptyset \sbr_{u}$}

\AxiomC{$\dotprf{\prf_{2}}$}
\UnaryInfC{$\ns \sbl \Gamma, \xbox A \sar \Delta \sbr_{w} \sbl \Sigma, A \sar \Pi \sbr_{u}$}

\RightLabel{$\cut$}
\BinaryInfC{$\ns \sbl \Gamma \sar \Delta \sbr_{w} \sbl \Sigma, A \sar \Pi \sbr_{u}$}
\DisplayProof
\end{center}
Due to the side condition on the $\xboxl$ rule, we know that $w \prpath{L_{\charx}} u$. Therefore, we may apply the hp-admissible $\shift$ rule as shown below, and then apply a $\cut$ on the formula $A$, which is of a smaller length.
\begin{center}
\AxiomC{$\dotprf{\prf_{1}}$}
\UnaryInfC{$\ns^{\da} \sbl \Gamma \sar (\charx) [ \emptyset \sar A ] \sbr_{w} \sbl \Sigma \sar \emptyset \sbr_{u}$}
\RightLabel{$\shift$}
\UnaryInfC{$\ns^{\da} \sbl \Gamma \sar \emptyset \sbr_{w} \sbl \Sigma \sar A \sbr_{u}$}

\AxiomC{$\dotprf{\prf_{3}}$}

\RightLabel{$\cut$}
\BinaryInfC{$\ns \sbl \Gamma \sar \Delta \sbr_{w} \sbl \Sigma \sar \Pi \sbr_{u}$}
\DisplayProof
\end{center}
This concludes the proof of cut-admissibility.
\end{proof}

\section{Undecidability via Negative Translations}\label{sec:interpolation}

In this section, we prove that the general validity problem for IGLs is undecidable. This is achieved by adapting the \emph{negative translation} of Gödel, Gentzen, and Kolmogorov (cf.~\cite{Bus98}) to our setting, yielding a faithful embedding of CGLs into IGLs. As the general validity problem is undecidable for CGLs (see Baldoni et al.~\cite[Corollary 2]{BalGioMar98}), this reduction implies undecidability in the intuitionistic setting as well.

Recall that at the end of Section~\ref{sec:nested-calculi}, we made the observation that dropping the single-conclusion restriction on nested sequents transforms every nested calculus for an IGL into one for a CGL. To make this precise, we define a \emph{classical nested sequent (CNS)} inductively as follows:
\begin{enumerate}

\item[$(1)$] $\Gamma \sar \Delta$ is a CNS, where $\Gamma$ and $\Delta$
are finite multisets of formulae from $\lang$;

\item[$(2)$] If $\ns_{1}, \ldots, \ns_{n}$ are CNSs, then
$\Gamma \sar \Delta, (\charx_{1})[\ns_{1}], \ldots,
(\charx_{n})[\ns_{n}]$ is a CNS.

\end{enumerate}
We define the calculus $\cglcalc$ to be the variant of $\calc$ whose rule applications are permitted to operate on CNSs rather than single-conclusioned nested sequents. It is straightforward to verify that each classical calculus $\cglcalc$ is a notational variant of a nested sequent calculus of Tiu et al.~\cite{TiuIanGor12}. Moreover, in the context of $\cglcalc$, we assume that all structural rules in Figure~\ref{fig:structural-rules} and $\cut$ have their single-conclusion restriction lifted and note that all rules are admissible by the work of Tiu et al.~\cite{TiuIanGor12}. For more information on the proof theory and semantics of classical (context-free) grammar logics, see~\cite{BalGioMar98,CerPen88,TiuIanGor12}.

\begin{definition}[Negative Translation] We define the \emph{negative translation} $\ddn{(\cdot)} : \lang \to \lang$ of formulae by recursion on the length of a formula. For any propositional atom $p \in \prop$, $\ddn{p} := \neg \neg p$. The remaining cases are listed below:
\begin{multicols}{2}
\begin{itemize}


\item $\ddn{\bot} := \neg \neg \bot$

\item $\ddn{(A \lor B)} := \neg(\neg\ddn{A} \land \neg\ddn{B})$

\item $\ddn{(A \land B)} := \ddn{A} \land \ddn{B}$

\item $\ddn{(A \iimp B)} := \ddn{A} \iimp \ddn{B}$

\item $\ddn{(\xdia A)} := \neg \xbox \neg \ddn{A}$

\item $\ddn{(\xbox A)} := \xbox \ddn{A}$

\end{itemize}
\end{multicols}
\end{definition}


\begin{lemma}\label{lem:double-neg-implies-ddn-trans}
For any $A \in \lang$, $\neg \neg \ddn{A} \sar \ddn{A}$ is provable in $\calc$.
\end{lemma}

\begin{proof} By induction on the length of $A$. We show the case where $A$ is of the form $\xbox B$. In the proof below, we invoke the admissibility of $\cut$ in $\calc$ (Theorem~\ref{thm:cut-elim}). The top left leaf sequent is clearly provable in $\calc$, we let $\ns := \bot \sar \ (\charx) [ \neg \ddn{B} \sar \bot ]$, and the top right leaf sequent is provable by IH. 
\begin{center}
\AxiomC{$\neg \neg \xbox \ddn{B}, \xbox \ddn{B} \sar \bot, (\charx) [ \neg \ddn{B}, \ddn{B} \sar \ ]$}
\RightLabel{$\xboxl$}
\UnaryInfC{$\neg \neg \xbox \ddn{B}, \xbox \ddn{B} \sar \bot, (\charx) [ \neg \ddn{B} \sar \ ]$}
\RightLabel{$\iimpr$}
\UnaryInfC{$\neg \neg \xbox \ddn{B} \sar \neg \xbox \ddn{B}, (\charx) [ \neg \ddn{B} \sar \ ]$}

\AxiomC{}
\RightLabel{$\botl$}
\UnaryInfC{$\ns$}
\RightLabel{$\iimpl$}
\BinaryInfC{$\neg \neg \xbox \ddn{B} \sar (\charx) [ \neg \ddn{B} \sar \bot ]$}
\RightLabel{$\iimpr$}
\UnaryInfC{$\neg \neg \xbox \ddn{B} \sar (\charx) [ \ \sar \neg \neg \ddn{B} ]$}

\AxiomC{$\sar (\charx) [ \neg \neg \ddn{B} \sar \ddn{B} ]$}
\RightLabel{$\cut$}
\BinaryInfC{$\neg \neg \xbox \ddn{B} \sar (\charx) [ \ \sar \ddn{B} ]$}
\RightLabel{$\xboxr$}
\UnaryInfC{$\neg \neg \xbox \ddn{B} \sar \xbox \ddn{B}$}
\DisplayProof
\end{center}
The remaining cases are analogous. 
\end{proof}

For $\Gamma = A_{1}, \ldots, A_{n}$, we define $\neg \Gamma := \neg A_{1}, \ldots, \neg A_{n}$. In the following proof, we make use of the \emph{swap-left operator} $\orth{}$, defined as: if $\ns$ is a CNS, then $\orth{\ns}$ is obtained by replacing each component $\Gamma \sar \Delta$ by $\Gamma, \neg \ddn{\Delta} \sar \emptyset$ in $\ns$, i.e., $(w,\Gamma \sar \Delta) \in \tr(\ns)$ \iffi $(w,\Gamma, \neg \ddn{\Delta} \sar \emptyset) \in \tr(\orth{\ns})$.

\begin{lemma}\label{lem:dn-classical-to-int}
If $\ns$ is provable in $\cglcalc$, then $\orth{\ns}$ is provable in $\calc$.
\end{lemma}

\begin{proof} By induction on the height of the proof of $\ns$ in $\cglcalc$. 

\textit{Base case.} We may transform an instance of $\id$ in $\cglcalc$ into a proof of the desired output in $\calc$. Recall that in the intuitionistic setting $\neg A = A \iimp \bot$, thus explaining the $\iimpl$ inference applied in the output proof. The top left leaf sequent in the output proof is easily shown to be provable as $p \sar \neg \neg p$ is valid in every IGL.
\begin{center}
\begin{tabular}{c c}
\AxiomC{}
\RightLabel{$\id$}
\UnaryInfC{$\ns\set{\Gamma, p \sar p, \Delta}$}
\DisplayProof

&

\AxiomC{$\orth{\ns}^{\da}\set{\Gamma, p, \neg \neg \neg p, \neg \ddn{\Delta} \sar \neg \neg p}$}
\AxiomC{}
\RightLabel{$\botl$}
\UnaryInfC{$\orth{\ns}\set{\Gamma, p, \bot, \neg \ddn{\Delta} \sar \emptyset}$}
\RightLabel{$\iimpl$}
\BinaryInfC{$\orth{\ns}\set{\Gamma, p, \neg \neg \neg p, \neg \ddn{\Delta} \sar \emptyset}$}
\DisplayProof
\end{tabular}
\end{center}

\textit{Inductive step.} We show the $\xboxr$ case below. The remaining cases are similar.
\begin{center}
\AxiomC{$\orth{\ns}\set{\Gamma, \neg \ddn{\Delta} \sar \emptyset, (\charx) [ \neg \ddn{A} \sar \emptyset ]}$}
\RightLabel{$\wkl, \wkr$}
\UnaryInfC{$\orth{\ns}\set{\Gamma, \neg \xbox \ddn{A}, \neg \ddn{\Delta} \sar \emptyset, (\charx) [ \neg \ddn{A} \sar \bot]}$}
\RightLabel{$\iimpr$}
\UnaryInfC{$\orth{\ns}\set{\Gamma, \neg \xbox \ddn{A}, \neg \ddn{\Delta} \sar \emptyset, (\charx) [ \emptyset \sar \neg \neg \ddn{A}]}$}

\AxiomC{$\nsii$}
\RightLabel{$\cut$}
\BinaryInfC{$\orth{\ns}\set{\Gamma, \neg \xbox \ddn{A}, \neg \ddn{\Delta} \sar \emptyset, (\charx) [\emptyset \sar \ddn{A}]}$}
\RightLabel{$\xboxr$}
\UnaryInfC{$\orth{\ns}\set{\Gamma, \neg \xbox \ddn{A}, \neg \ddn{\Delta} \sar \xbox \ddn{A}}$}

\AxiomC{}
\UnaryInfC{$\orth{\ns}\set{\Gamma, \bot, \neg \ddn{\Delta} \sar \emptyset}$}

\RightLabel{$\iimpl$}
\BinaryInfC{$\orth{\ns}\set{\Gamma, \neg \xbox \ddn{A}, \neg \ddn{\Delta} \sar \emptyset}$}
\DisplayProof
\end{center}
We let $ \nsii := \orth{\ns}\set{\Gamma, \neg \xbox \ddn{A}, \neg \ddn{\Delta} \sar \ (\charx) [ \neg \neg \ddn{A} \sar \ddn{A}]}$ denote the right premise of $\cut$, which is provable by \lem~\ref{lem:double-neg-implies-ddn-trans} and the hp-admissibility of $\nec$, $\ew$, $\wkl$, and $\wkr$. The top left leaf sequent is provable by IH.
\end{proof}

The following is straightforward to prove by induction on the length of $A$.

\begin{lemma}\label{lem:dn-implies-formula}
For any $A \in \lang$, $\ddn{A} \sar A$ is provable in $\cglcalc$.
\end{lemma}

In the following proof, we make use of the \emph{swap-right operator} $\north{}$, defined as follows: if $\ns$ is a nested sequent, then $\north{\ns}$ is the CNS obtained by replacing each component $\Gamma \sar \Delta$ by $\sar \neg \Gamma, \Delta$ in $\ns$, that is, $(w,\Gamma \sar \Delta) \in \tr(\ns)$ \iffi $(w, \emptyset \sar \neg \Gamma, \Delta) \in \tr(\north{\ns})$.

\begin{lemma}\label{lem:dn-int-to-classical}
If $\ns$ is provable in $\calc$, then $\north{\ns}$ is provable in $\cglcalc$.
\end{lemma}

\begin{proof} By induction on the height of the proof of $\ns$. The base cases for the rules $\id$ and $\botl$ are trivial. We show the $\xdial$ case of the inductive step and note that the remaining cases are straightforward.
\begin{center}
\AxiomC{$\ns \sbl \emptyset \sar \neg \Gamma, \Delta, (\charx) [\emptyset \sar \neg A ] \sbr$}
\RightLabel{$\xboxr$}
\UnaryInfC{$\ns \sbl \emptyset \sar \neg \Gamma, \xbox \neg A, \Delta \sbr$}

\AxiomC{$\ns \sbl \xbox \neg A \sar \neg \Gamma, \neg \xdia A, \Delta \sbr$}

\RightLabel{$\cut$}
\BinaryInfC{$\ns \sbl \emptyset \sar \neg \Gamma, \neg \xdia A, \Delta \sbr$}
\DisplayProof
\end{center}
The right premise of $\cut$ is provable since $\ns \sbl \xbox \neg A \sar \neg \xdia A \sbr$ is valid in every CGL. The top left leaf sequent is provable by IH. The result follows since $\cut$ is eliminable in $\cglcalc$ by~\cite[Theorem 3.4]{TiuIanGor12}.
\end{proof}

\begin{theorem}\label{thm:classical-intuitionistic-reduction}
$\cglcalc \proves (\emptyset \sar A)$ \iffi $\calc \proves (\emptyset \sar \ddn{A})$.
\end{theorem}

\begin{proof} For the forward direction, assume $\emptyset \sar A$ has a proof in $\cglcalc$. By \lem~\ref{lem:dn-classical-to-int}, $\neg \ddn{A} \sar \emptyset$ has a proof in $\calc$, meaning, $\neg \ddn{A} \sar \bot$ has a proof by the hp-admissibility of $\wkr$. Therefore, $\emptyset \sar \neg \neg \ddn{A}$ has a proof by one application of $\iimpr$. By \lem ~\ref{lem:double-neg-implies-ddn-trans} and the admissibility of $\cut$, we know that $\emptyset \sar \ddn{A}$ has a proof in $\calc$. 

For the backward direction, we assume that $\emptyset \sar \ddn{A}$ has a proof in $\calc$. By \lem~\ref{lem:dn-int-to-classical}, $\emptyset \sar \ddn{A}$ is provable in $\cglcalc$. By \lem~\ref{lem:dn-implies-formula} and $\cut$ admissibility, it follows that $\sar A$ is provable in $\cglcalc$.
\end{proof}

The following is a consequence of the above theorem and the undecidability of the general validity problem for classical context-free grammar logics~\cite{BalGioMar98}.

\begin{corollary}
The general validity problem is undecidable for our class of IGLs.
\end{corollary}

\section{Concluding Remarks}\label{sec:conclusion}


In this paper, we initiated the structural proof theory of intuitionistic grammar logics by supplying proof systems in the formalism of nested sequents. We investigated fundamental invertibility and admissibility properties, and introduced the novel shift $\shift$ rule, which enabled a uniform proof of cut-admissibility across all IGLs. Completeness was established syntactically by means of cut-admissibility, while soundness was proven model-theoretically. Finally, we adapted the negative translation of G\"{o}del, Gentzen, and Kolmogorov to the multi-modal setting, showing that a formula $A$ is a theorem of a CGL if and only if its translation $\ddn{A}$ is a theorem of the corresponding IGL. This equivalence is witnessed by proof transformations between the multi-conclusioned nested sequent proofs of Tiu et al.~\cite{TiuIanGor12} and our single-conclusioned nested sequent proofs, and reduces the general validity problem for CGLs to that of IGLs. Since the former is known to be undecidable, this yields the undecidability of the general validity problem for IGLs as well.

There are several directions for future research. For instance, our nested sequent systems are not fully invertible, e.g., the $\iimpl$ rule is not invertible. This raises the natural goal of developing fully invertible systems for IGLs, which may be better suited for counter-model extraction and, consequently, for investigating decidability. It seems plausible that labeled generalizations of our nested systems, in the style of Marin et al.~\cite{MarMorStr21}, would yield fully invertible calculi. Such systems may also extend naturally to Scott--Lemmon axioms, since labeled calculi are well suited to capture their corresponding frame conditions. Moreover, since the general validity problem for IGLs is undecidable, a more fine-grained analysis of which subclasses of IGLs are decidable is warranted, together with a study of the computational complexity of those decidable fragments.


\bibliography{bibliography}

\end{document}